\newcommand{\commentout}[1]{}
\newtheorem{thm}{Theorem}[section]
\newtheorem{theorem}[thm]{Theorem}
\newtheorem{prop}[thm]{Proposition}
\newtheorem{cor}[thm]{Corollary}
\newtheorem{ex}[thm]{Example}
\newtheorem{rmk}[thm]{Remark}
\newtheorem{lemma}[thm]{Lemma}
\newcommand{\nwc}{\newcommand*}
\nwc{\ben}{\begin{equation*}}
\nwc{\bea}{\begin{eqnarray}}
\nwc{\beq}{\begin{eqnarray}}
\nwc{\bean}{\begin{eqnarray*}}
\nwc{\beqn}{\begin{eqnarray*}}
\nwc{\beqast}{\begin{eqnarray*}}
\nwc{\eal}{\end{align}}
\nwc{\een}{\end{equation*}}
\nwc{\eea}{\end{eqnarray}}
\nwc{\eeq}{\end{eqnarray}}
\nwc{\eean}{\end{eqnarray*}}
\nwc{\eeqn}{\end{eqnarray*}}
\theoremstyle{remark}
\nwc{\nn}{\nonumber}
\nwc{\mb}{\mathbf}
\nwc{\ml}{\mathcal}
\newcommand{\lt}{\left}
\newcommand{\rt}{\right}
\nwc{\vep}{\varepsilon}
\nwc{\ep}{\epsilon}
\nwc{\vrho}{\varrho}
\nwc{\orho}{\bar\varrho}
\nwc{\vpsi}{\varpsi}
\nwc{\lamb}{\lambda}
\nwc{\om}{\omega}
\nwc{\Om}{\Omega}
\nwc{\al}{\alpha}
\nwc{\sgn}{\mbox{\rm sgn}}
\nwc{\IA}{\mathbb{A}} %algebraic
\nwc{\bi}{\mathbf{i}}
\nwc{\ba}{\mathbf{a}}
\nwc{\bmb}{\mathbf{b}}
\nwc{\bo}{\mathbf{o}}
\nwc{\IB}{\mathbb{B}}
\nwc{\IC}{\mathbb{C}} %complex
\nwc{\ID}{\mathbb{D}} %Dedekind
\nwc{\IM}{\mathbb{M}} %Dedekind
\nwc{\IP}{\mathbb{P}} %Dedekind
\nwc{\II}{\mathbb{I}} %Dedekind
\nwc{\IE}{\mathbb{E}} %Euklides
\nwc{\IF}{\mathbb{F}} %finite field
\nwc{\IG}{\mathbb{G}} %Gauss
\nwc{\IN}{\mathbb{N}} %natural
\nwc{\IQ}{\mathbb{Q}} %rational
\nwc{\IR}{\mathbb{R}} %real
\nwc{\IT}{\mathbb{T}} %torus
\nwc{\IZ}{\mathbb{Z}} %integers
\nwc{\cE}{{\ml E}}
\nwc{\cP}{{\ml P}}
\nwc{\cQ}{{\ml Q}}
\nwc{\cL}{{\ml L}}
\nwc{\cX}{{\ml X}}
\nwc{\cW}{{\ml W}}
\nwc{\cZ}{{\ml Z}}
\nwc{\cR}{{\ml R}}
\nwc{\cV}{{\ml V}}
\nwc{\cT}{{\ml T}}
\nwc{\crV}{{\ml L}_{(\delta,\rho)}}
\nwc{\cC}{{\ml C}}
\nwc{\cO}{{\ml O}}
\nwc{\cA}{{\ml A}}
\nwc{\cK}{{\ml K}}
\nwc{\cB}{{\ml B}}
\nwc{\cD}{{\ml D}}
\nwc{\cF}{{\ml F}}
\nwc{\cS}{{\ml S}}
\nwc{\cM}{{\ml M}}
\nwc{\cG}{{\ml G}}
\nwc{\cH}{{\ml H}}
\nwc{\bk}{{\mb k}}
\nwc{\bn}{{\mb n}}
\nwc{\bp}{{\mb p}}
\nwc{\bz}{\mb z}
\nwc{\bl}{{\mb l}}
\nwc{\bj}{{\mb j}}
\nwc{\bs}{{\mb s}}
\nwc{\by}{\mathbf{h}}
\nwc{\bZ}{\mathbf{Z}}
\nwc{\bF}{\mathbf{F}}
\nwc{\bE}{\mathbf{E}}
\nwc{\bV}{\mathbf{V}}
\nwc{\bY}{\mathbf Y}
\nwc{\br}{\mb r}
\nwc{\pft}{\cF^{-1}_2}
\nwc{\bU}{{\mb U}}
\nwc{\bG}{{\mb G}}
\nwc{\bg}{\mathbf{g}}
\nwc{\mbf}{\mathbf{f}}
\nwc{\mbe}{\mathbf{e}}
\nwc{\be}{\mathbf{e}}
\nwc{\ind}{\operatorname{I}}
\nwc{\mbx}{\mathbf{f}}
\nwc{\bb}{\mathbf{g}}
\nwc{\xmax}{f_{\rm max}}
\nwc{\xmin}{f_{\rm min}}
\nwc{\suppx}{\hbox{\rm supp} (\mbf)}
\nwc{\cI}{\IZ^2_N}
\nwc{\chis}{{\chi^{\rm s}}}
\nwc{\chii}{{\chi^{\rm i}}}
\nwc{\pdfi}{{f^{\rm i}}}
\nwc{\pdfs}{{f^{\rm s}}}
\nwc{\pdfii}{{f_1^{\rm i}}}
\nwc{\pdfsi}{{f_1^{\rm s}}}
\nwc{\thetatil}{{\tilde\theta}}
\nwc{\red}{\color{red}}
\nwc{\blue}{\color{blue}}
\nwc{\prox}{\hbox{prox}}
\nwc{\diag}{\hbox{\rm diag}}
\nwc{\supp}{{\hbox{\rm supp}}}
\nwc{\sloc}{J_{\rm f}}
\nwc{\bu}{{\mb u}}
\nwc{\bv}{{\mb v}}
\nwc{\cU}{\mathcal{U}}
\nwc{\cN}{\mathcal{N}}
\nwc{\bN}{\mathbf{N}}
\nwc{\mbm}{\mathbf{m}}
\nwc{\bw}{\mathbf{w}}
\nwc{\bom}{\mathbf{w}}
\nwc{\bt}{\mathbf{t}}
\nwc{\z}{y}
\nwc{\cY}{\mathcal{Y}}
\nwc{\bM}{\mathbf{M}}
\nwc{\half}{{1\over 2}}
\nwc{\Sf}{S_{\rm f}}
\nwc{\Jf}{J_{\rm f}}
\nwc{\nul}{\hbox{\rm null}_\IR}
\nwc{\spanR}{\hbox{\rm span}_\IR}
\nwc{\Arg}{\hbox{\rm Arg~}}
\nwc{\fdr}{S_{\rm f}}
\nwc{\phase}[1]{\exp\lt[i\measured #1\rt]}
\nwc{\im}{{\rm i}}
\nwc{\lb}{\llbracket}
\nwc{\rb}{\rrbracket}
\begin{document}
 \title{
Raster Grid Pathology and the Cure %Other Ambiguities in Blind Ptychography
  %How to Avoid It
}
 
 \author{Albert Fannjiang 
 \address{
Department of Mathematics, University of California, Davis, California  95616, USA. Email:  {\tt fannjiang@math.ucdavis.edu}
}}

  \date{}

\maketitle 

\begin{abstract} Blind ptychography is a phase retrieval method using multiple coded diffraction patterns from different, overlapping parts of the unknown extended  object illuminated with an unknown  window function.  The window function is also known as 
the probe in the optics literature.  As such blind ptychography is an inverse problem of  simultaneous recovery of the object and the window function given the intensities of the windowed Fourier transform and  has a multi-scale set-up in which the probe has an intermediate scale between the pixel scale and the macro-scale of the extended object. 
Uniqueness problem for blind ptychography is analyzed rigorously  for the raster scan (of a constant step size $\tau$) and its variants, in which
another scale comes into play: the overlap between adjacent blocks
(the shifted windows). 
The block phases are shown to form an arithmetic progression
and the complete characterization of the raster scan ambiguities is given,
including: First, the periodic raster grid pathology
of  degrees of freedom  proportional to $\tau^2$ and, second, a non-periodic,  arithmetically progressing phase shift from block to block.
Finally irregularly perturbed raster scans are shown  to remove all ambiguities other than the inherent ambiguities of the scaling factor and
the affine phase ambiguity under the minimum requirement of roughly $50\%$ overlap ratio. \end{abstract}

%\begin{keywords} Blind ptychography, phase retrieval, raster grid pathology, affine phase ambiguity, affine phase ambiguity.
%\end{keywords}
%\begin{AMS} {\tiny AMS classification: 42B, 65T.} \end{AMS}

\section{Introduction}

In the last decade, ptychography has made rapid technological advances and developed into a powerful lensless coherent imaging method \cite{DM08,EM-ptych2,FPM13}. 
Ptychography  collects the diffraction patterns from overlapping illuminations of various parts of the unknown object 
using  a localized coherent source (the probe) \cite{Pfeiffer,Nugent,Rod08}, and  builds on the advances in synthetic aperture methods to extend phase retrieval 
 to unlimited objects and enhance imaging resolution \cite{random-aperture, EM0,EM1,supres-PIE,ptycho10}.
Blind ptychography goes a step further and seeks to reconstruct both the unknown object and the unknown probe simultaneously \cite{probe09,Yang14}.

Mathematically, blind ptychography is an inverse problem of  simultaneous recovery of the object and the window function (the probe) given the intensities of the windowed Fourier transform. In ptychography, the window function has an intermediate scale between the pixel scale and the macro-scale of the extended object. 
 
 The performance of ptychography depends on factors such as the type of illumination
 and the measurement scheme, including the amounts of overlap
 and probe positions. 
% \commentout{%11/14/2018
 For example, the use of randomly structured illuminations can improve ptychographic reconstruction over that
 with  regular illuminations \cite{Fucai2,Spread,random-coding,diffuser,RCM,random-aperture,ptycho-rpi,Horisaki1,Horisaki2, DRS-ptych,ptych-unique,rpi,FAK}. 
 %}
 Experiments suggest an  overlap ratio of at least 50\%, typically 
60-70\% between adjacent illuminations for blind ptychography \cite{overlap,ePIE09}. Optimizing the scan pattern can significantly improve
the performance of ptychography and 
is an important part of the  experimental design.

In particular, empirical evidences  repeatedly point to the pitfalls of the raster scan, which is experimentally the easiest to implement \cite{artifact}. Mathematically speaking, blind ptychography with raster scan seeks to recover  both the object and the window function
(the probe) as unknowns
but only the 2D windowed Fourier {\em intensities}  (coded diffraction patterns)
as the data.  Raster scanning  refers to the positions of the window function. 
The raster scan scheme  is susceptible to periodic artifacts, known as {\em raster grid pathology}, attributed  to the regularity and symmetry of the scan positions  \cite{probe09}. 

On the other hand, to the best of our knowledge, raster grid pathology has not be precisely formulated and analyzed. 
%as it can often be mitigated experimentally by introducing novel sampling schemes based on ad hoc rules at the expense of acquisition time \cite{optimal,circle}. 
The purpose of the present work is a complete analysis of raster grid pathology from the perspective of  inverse problems. 
Uniqueness of solution is fundamental to any inverse problem. 
The exceptions to uniqueness are called the ambiguities. We identify the rater grid pathology reported in optics literature  as 
{\em periodic} ambiguities of period equal to the step size of the raster scan. Moreover, we will characterize all the ambiguities
inherent to the raster scan ptychography and propose a simple modification that can eliminate all the ambiguities
except for those inherent to {\em any} blind ptychography.

The first thing to note is that raster grid pathology only appears in blind ptychography but not in ptychography with
a known probe. In the latter case, the only ambiguity is a constant phase factor which has no real significance (and will be ignored)
and the convergence behaviors of the raster scan ptychography with a known probe has been rigorously established 
\cite{ptych-unique}. 

Second, there are two ambiguities inherent to any blind ptychography: a scaling factor and
an affine phase factor. To give a precise description, we introduce some notation as follows. 

Let $\IZ_n^2=\lb 0,n-1\rb^2$ be the object domain containing the support of the discrete object $f$ where $\lb k, l\rb$ denotes the integers between, and including,  $k\le l\in \IZ$.  Let $\cM^{00}:=\IZ_m^2, m<n,$ be the initial probe area which is  also the support of the probe $\mu^{00}$ describing the illumination field.  Here $n$ is the global scale and $m$ the intermediate scale of the set-up. 

Let $\cT$ be the set of all shifts, including $(0,0)$,  involved  in the ptychographic measurement. 
 Denote by $\mu^\bt$ the $\bt$-shifted probe for all $\bt\in \cT$ and $\cM^\bt$ the domain of
$\mu^\bt$. Let $f^\bt$ the object restricted to $\cM^\bt$.
\commentout{and  $\mb{\rm Twin}(f^\bt)$ the twin image of $f^\bt$ in $\cM^\bt$ defined as
\[
\mb{\rm Twin}(f^\bt)(\bn)=\bar f^\bt(\bN+2\bt-\bn),\quad\bn\in \cM^\bt,\quad\bN=(n,n).
\]
}
We write $f=\vee_\bt f^\bt$ and refer to each $f^\bt$ as a part of $f$. In ptychography, the original object is broken up into a set of overlapping object parts, each of which produces a $\mu^\bt$-coded diffraction pattern  (i.e. Fourier intensity).  
The totality of the coded diffraction patterns is called the ptychographic measurement data.  Let $\nu^{00}$ (with $\bt=(0,0)$) and $g=\vee_\bt g^\bt$ be any pair
 of the probe and the object estimates producing  the same ptychography data as $\mu^{00}$ and $f$, i.e.
 the diffraction pattern of $\nu^\bt\odot g^\bt$ is identical to that of $\mu^\bt\odot f^\bt$ where
 $\nu^\bt$ is the $\bt$-shift of $\nu^{00}$ and $g^\bt$ is the restriction of $g$ to $\cM^\bt$. 
 For convenience, we assume the value zero for $\mu^\bt, f^\bt,\nu^\bt, g^\bt $ outside of $\cM^\bt$
 and the periodic boundary condition on $\IZ_n^2$ when $\mu^\bt$ crosses over the boundary of $\IZ_n^2$.

Consider the probe and object estimates
\beq
\label{lp1}
\nu^{00}(\bn)&=&\mu^{00}(\bn) \exp(-\im a -\im \bw\cdot\bn),\quad\bn\in\cM^{00}\\
\label{lp2} g(\bn)&=& f(\bn) \exp(\im b+\im \bw\cdot\bn),\quad\bn\in \IZ^2_n
\eeq
for any $a,b\in \IR$ and $\bw\in \IR^2$.  For any $\bt$, we have the following
calculation
\beqn
\nu^\bt(\bn)&=&\nu^{00}(\bn-\bt)\\
&=&\mu^{00}(\bn-\bt) \exp(-\im \bw\cdot(\bn-\bt))\exp(-\im a)\\
&=&\mu^\bt(\bn) \exp(-\im \bw\cdot(\bn-\bt))\exp(-\im a)
\eeqn
and hence for all $\bn\in \cM^\bt, \bt\in\cT$
\beq
\label{drift2}
\nu^\bt(\bn) g^\bt(\bn)&=&\mu^\bt(\bn)f^\bt(\bn) \exp(\im(b-a))\exp(\im \bw\cdot\bt). 
\eeq
Clearly, \eqref{drift2} implies that $g$ and $\nu^{00}$ produce the same ptychographic data as $f$ and $\mu^{00}$ since
for each $\bt$, $\nu^\bt\odot g^\bt$ is a constant phase factor times $\mu^\bt\odot f^\bt$. 

In addition to the affine phase ambiguity \eqref{lp1}-\eqref{lp2}, another ambiguity,  a scaling factor ($g=c f, \nu^{00}=c^{-1} \mu^{00}, c>0$),
is also inherent to any blind ptychography as can easily be checked. We refer to 
 the scaling factor and the affine phase ambiguity as the inherent ambiguities of blind ptychography. 
Note that when the probe is exactly known $\nu^{00}=\mu^{00}$, neither ambiguity can occur.

A recent theory of uniqueness for blind ptychography with random probes  \cite{blind-ptych} establishes  that  for general sampling schemes and with high probability  (in  the selection of the random probe), we have the relation
\beq
\label{7.1}
\nu^{\bt}\odot g^\bt&=&e^{\im\theta_{\bt}}\mu^{\bt}\odot f^\bt,\quad \bt\in \cT, 
\eeq
for some constants $\theta_\bt\in \IR$ (called block phases here)
 if $g$ and $\nu^\bt$ produce the same diffraction pattern as $f$ and $\mu^\bt$ for all $\bt\in \cT$. 
  Here $\odot$ denotes the component-wise (Hadamard) product. The masked object parts $\psi^\bt:=\mu^\bt\odot f^\bt$ are also known
as the {\em exit waves} in the scanning transmission electron microscopy literature. 

We refer to  \eqref{7.1} as the {\em local uniqueness} of the exit waves which means unique determination of  the exit waves 
 up to the block phases but not globally since $\theta_\bt$
 can depend on $\bt$ and vary from block to block. However, the block phase profile is  not arbitrary. For example, block phases for the raster scan and the perturbed raster scan always form an arithmetic progression (see below), possessing two degrees of freedom. 
 
 Once the exit waves $\psi^\bt$ are determined up to block phases,
 \eqref{7.1} with $\theta_\bt$ treated as parameters represents a bilinear system (in $\nu^{00}$ and $g$) of $m^2\times |\cT|$ equations coupled through the overlap between adjacent blocks. The total number of complex variables is $n^2+m^2$. In the case of raster scan with step size $\tau$,  $|\cT|\approx {n^2/\tau^2}$ and $m^2|\cT|\approx n^2(\tau/m)^{-2}$ where  the shift ratio $\tau/m$ is 1 minus the overlap ratio
 $(m-\tau)/m$. For $50\%$ overlap ratio and $m<n$, $m^2|\cT|\approx 4n^2$,  a couple times larger than $(n^2+m^2)$. This speaks of the potential  redundancy of information in \eqref{7.1} on dimension count. Yet this simplistic analysis is deceptive as we will see that due to degenerate coupling the raster scan has ambiguities of exactly $\tau^2+2$ degrees of freedom in addition to the three degrees of freedom of the inherent ambiguities discussed above. 

\commentout{%11/14/2018
Since both the scaling factor and the affine phase ambiguity are inherent to any blind ptychography, this means that
an affine profile are part of the intrinsic degrees of freedom of the block phases. 
However, the affine phase ambiguity is not the only cause of  the affine profile in the block phases 
for blind raster scan ptychography. 
}

 We will take \eqref{7.1} as the starting point of our analysis of raster scan ambiguities, first to characterize 
 all the ambiguities in the raster scan and, second, to show how to harness 
 the nonlinear coupling in  \eqref{7.1} by more nuanced design of measurement schemes in which pixel-scale changes result in total eradication of ambiguities other than the inherent ones  through  the intermediate-scale coupling.

 \subsection{Our contribution}
We first prove that  the block phases of the raster scan of any step size $\tau<m$ always have
an affine profile (Section \ref{sec:drift}, Theorem \ref{lem2}). 
We then give a complete characterization of the  raster scan ambiguities (Theorem \ref{thm:recovery}).

Roughly speaking, there are two  types of ambiguities besides the inherent ambiguities (the scaling factor and the affine phase ambiguity \eqref{lp1}-\eqref{lp2}). 
First, there is the non-periodic, arithmetically progressing ambiguity,  inherited from the aforementioned affine block phase profile,
which varies  on the block scale while the affine phase ambiguity varies on the pixel scale. 

Second, there are $\tau$-periodic ambiguities of $\tau^2$ degrees of freedom, which we identify as
mathematical description of  the raster grid pathology reported in the optics literature.
The larger the step size the (much) greater the degrees of ambiguity which can not be removed without
extra prior information. 

Finally we demonstrate a simple mechanism for eliminating all the other ambiguities than the scaling  factor and the affine phase ambiguity by slightly perturbing
the raster scan with the minimum overlap ratio roughly $50\%$, consistent with experimental findings in the optics literature   (Section \ref{sec:mix}, Theorem \ref{thm:mix}). 
The optimal tradeoff between the speed of data acquisition and the convergence rate of reconstruction 
lies in the balance between the average step size and the overlap size. 

%And finally by using a sufficiently large probe and imposing the periodic boundary condition, we remove the phase drift and the affine phase ambiguity as well (Section \ref{sec:u2}). 

The rest of the paper is organized as follows. 
In Section \ref{sec:lattice}, we give a detailed presentation of the raster scan. In Section \ref{sec:drift}, we prove
that the block phases have an affine profile. In Section \ref{sec:recovery}, we give a complete characterization of
the raster scan ambiguities. In Section \ref{sec:mix} we show that slightly perturbed raster scan has no other ambiguities than
the scaling factor and the affine phase ambiguity. In Section \ref{sec:num}, we give numerical demonstrate of  the perturbed raster scan. We conclude with a few remarks in Section \ref{sec:con}. 

  \begin{figure}[t]
\begin{center}
%\subfigure[TCB]{
\subfigure[Ptychography set-up]{\includegraphics[width=7cm]{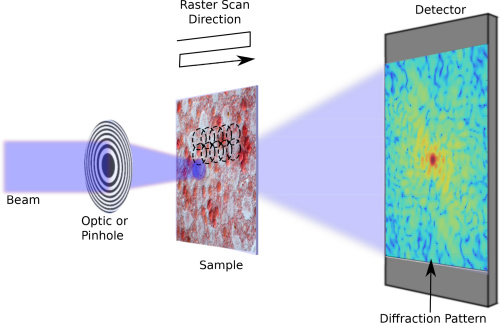}}\hspace{2cm}
\subfigure[raster scan pattern]{\includegraphics[width=5cm]{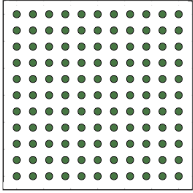}}
%}\hspace{-0.6cm}
\caption{Simplified ptychographic setup showing a Cartesian grid used for the overlapping raster scan positions \cite{parallel}.
}
\label{fig0}
\end{center}
\end{figure}

\section{Raster scan}
\label{sec:lattice}
\commentout{
A lattice in $ \mathbb{R}^{n}$ is a subgroup of $\IR^n$ which is isomorphic to $ \mathbb{Z}^{n}$. In other words, for any basis of $\mathbb{R}^{n}$, the subgroup of all linear combinations with integer coefficients of the basis vectors forms a lattice. A lattice may be viewed as a regular tiling of a space by a primitive cell.

A lattice is the symmetry group of discrete translational symmetry in $n $ directions.  As a group (dropping its geometric structure) a lattice is a finitely-generated free abelian group, and thus isomorphic to $\mathbb{Z}^{n}$.

A wallpaper group (or plane symmetry group or plane crystallographic group) is a two-dimensional symmetry group. Mathematically, a wallpaper group or plane crystallographic group is a type of topologically discrete group of isometries of the Euclidean plane that contains two linearly independent translations.
}

The raster scan can be formulated as the 2D lattice with the basis $\{\bv_1, \bv_2\}$
\beq\label{lattice}
\cT=\{\bt_{kl}\equiv k\bv_1+l\bv_2: k,l\in \IZ\},\quad \bv_1,\bv_2\in \IZ^2
\eeq
acting on the object domain $\IZ_n^2$. 
Instead of $\bv_1$ and $\bv_2$ we can also take $ \bu_1=\ell_{11}\bv_1+\ell_{12}\bv_2$ and $\bu_2=\ell_{21} \bv_1+\ell_{22}\bv_2$ for integers $\ell_{ij}$ with $\ell_{11}\ell_{22}-\ell_{12}\ell_{21}=\pm 1$. This ensures that $\bv_1$ and $\bv_2$ themselves are integer linear combinations of $\bu_1,\bu_2$. Every lattice basis defines a fundamental parallelogram, which determines the lattice. 
There are five 2D lattice types, called period lattices, as given by the crystallographic restriction theorem. In contrast,
there are 14 lattice types in 3D, called Bravais lattices \cite{Conway}. 

We will focus on the simplest raster scan corresponding to  the  {\em square lattice} with $\bv_1=(\tau,0), \bv_2=(0,\tau)$ of  step size $\tau\in \IN$. Our results can easily be extended to
other lattice schemes.

Under  the periodic boundary condition the raster scan with the step size $\tau=n/q, q\in \IN,$ $\cT$ consists of $\bt_{kl}={\tau}(k,l)$, with $k,l\in \{0,1,\cdots, q-1\}$. The periodic boundary condition means that for $k=q-1$ or $l=q-1$  the shifted probe is wrapped around into the other end of the object domain.  
  Denote the $\bt_{kl}$-shifted  probes and blocks by $\mu^{kl}$ and $\cM^{kl}$, respectively. Likewise, denote
by $f^{kl}$ the object restricted to the shifted domain $\cM^{kl}$.

Depending on whether $\tau\le m/2$ (the under-shifting case) or $\tau>m/2$ (the over-shifting case), we have two types of schemes. For the former case,  all pixels of the the object participate
in an equal number of diffraction patterns. For the latter case, however, $4(m-\tau)^2$ pixels participate in four,
$4(2{\tau}-m)(m-{\tau})$ pixels participate in two 
and $(2{\tau}-m)^2$  pixels participate in only one diffraction pattern, resulting in uneven coverage of the object.

\subsection{The under-shifting scheme $\tau\le m/2$}
\label{sec:under}
For simplicity of presentation we consider the case of $\tau=m/p$ for some integer $p\ge 2$ (i.e. $pn=qm$). 
As noted above, all pixels of the the object participate
in the same number (i.e. $2p$) of diffraction patterns. 
The borderline case $\tau=m/2$   (dubbed the minimalist scheme in \cite{ptych-unique}) corresponds to
$p=2$.

We
partition the cyclical $\bt^{kl}$-shifted probe $\mu^{kl}$ and the corresponding domain 
into equal-sized square blocks as 
\beq
\label{mu}
\mu^{kl}&=&
\left[
\begin{matrix}
\mu^{kl}_{00}  & \mu^{kl}_{10} &\cdots&  \mu^{kl}_{p-1,0}   \\
\mu^{kl}_{01}&\mu^{kl}_{11}&\cdots& \mu^{kl}_{p-1,1}\\
\vdots&\vdots&\vdots&\vdots\\
\mu^{kl}_{0,p-1}  &\mu^{kl}_{1,p-1}&\cdots& \mu^{kl}_{p-1,p-1} 
\end{matrix}
\right],\quad \mu^{kl}_{ij}\in \IC^{m/p\times m/p}\\
 \cM^{kl}&=&
\left[
\begin{matrix}
\cM^{kl}_{00}  & \cM^{kl}_{10} &\cdots&  \cM^{kl}_{p-1,0}   \\
\cM^{kl}_{01}&\cM^{kl}_{11}&\cdots& \cM^{kl}_{p-1,1}\\
\vdots&\vdots&\vdots&\vdots\\
\cM^{kl}_{0,p-1}  &\cM^{kl}_{1,p-1}&\cdots& \cM^{kl}_{p-1,p-1} 
\end{matrix}
\right],\quad \cM^{kl}_{ij}\in \IZ^{m/p\times m/p}
\eeq
under the periodic boundary condition 
\beq
\label{bc1''}
\mu^{q-1-i,k}_{j,l}=\mu^{0k}_{j-i-1, l}, &&
\mu^{k,q-1-i}_{l,j}=\mu^{k0}_{l,j-i-1}, \\
\cM^{q-1-i,k}_{j,l}=\cM^{0k}_{j-i-1, l}, &&
\cM^{k,q-1-i}_{l,j}=\cM^{k0}_{l,j-i-1}
\eeq
for all $ 0\le i\le j-1\le p-2, k=1,\dots,q-1, l=1,\dots, p-1.$

Accordingly, we divide the object $f$ into $q^2$ non-overlapping square blocks
  \beq \label{fp}
 f=
\left[
\begin{array}{ccc}
f_{00}  & \ldots  & f_{q-1,0}   \\
\vdots  & \vdots  &\vdots   \\
f_{0,q-1}  &\ldots   &  f_{q-1, q-1}  
\end{array}
\right],\quad f_{ij}\in \IC^{m/p\times m/p}. \eeq

  \commentout{%09/23/2018
Let $\Phi$ denote the 2D oversampled Fourier transforms divided  into $p\times p$ blocks
\[ \Phi=
\left[
\begin{matrix}
\Phi_{00}  &  \Phi_{10} &\cdots& \Phi_{p-1,0}\\
\Phi_{01} &\Phi_{11}&\cdots& \Phi_{p-1,1}\\
\vdots&\vdots&\vdots&\vdots\\
\Phi_{0, p-1}&\Phi_{1,p-1}&\cdots& \Phi_{p-1,p-1}
\end{matrix}
\right]
\]
where  $\Phi_{ij}:  \cM^{kl}_{ij} \to \IC^{2m/p \times 2m/p}$ is normalized such that
     \beq 
\Phi^*_{ij} \Phi_{i'j'}=p^{-2}{\delta_{ij,i'j'}}I_{m/p\times m/p}, \quad \forall i,j,i',j'.
\eeq
The diffracted field  $H$ can be partitioned into $q\times q$ blocks, $[H_{ij}]$,
where
\[ 
H_{ij}=\sum_{k,l=0}^{p-1}
\Phi_{kl}(\mu^{i-1,j-1}_{kl}\odot f^{i+k,j+l}), \quad i,j =0,1, \ldots, q-1
\]
where $f_{i+k,j+l}$ is cyclically defined with respect to the subscript. 
}

 \subsection{The over-shifting scheme $\tau>m/2$.} \label{sec:over}
 Because of uneven coverage of the object domain, the over-shifting case is more complicated. 
 
We divide the shifted probe $\mu^{kl}$ and its domain as
\beq\label{mu2}
\mu^{kl}&=&
\left[
\begin{matrix}
\mu^{kl}_{00}  & \mu^{kl}_{10} &  \mu^{kl}_{20}   \\
\mu^{kl}_{01}&\mu^{kl}_{11}&\mu^{kl}_{21}\\
\mu^{kl}_{02}  &\mu^{kl}_{12}& \mu^{kl}_{22} 
\end{matrix}
\right]\in\IC^{m\times m}\\
 \cM^{kl}&=&
\left[
\begin{matrix}
\cM^{kl}_{00}  & \cM^{kl}_{10} &  \cM^{kl}_{20}   \\
\cM^{kl}_{01}&\cM^{kl}_{11}&\cM^{kl}_{21}\\
\cM^{kl}_{02}  &\cM^{kl}_{12}& \cM^{kl}_{22} 
\end{matrix}\right]\in \IZ^{m\times m}
\eeq
under the periodic boundary condition 
\beq
\label{bc1'}
\cM^{q-1,k}_{2j}=&\cM^{0k}_{0j},\quad 
\cM^{k,q-1}_{i2}=&\cM^{k0}_{i0}\\
\mu^{q-1,k}_{2j}=& \mu^{0k}_{0j},
\quad \mu^{k,q-1}_{i2}= &\mu^{k0}_{i0}, \label{bc2'}
\eeq
for all $k=1,\dots,q-1$ and $i,j=0,1,2$, where $q$ is the number of shifts in each direction. 

Note that the sizes of these blocks are not equal: the four corner blocks are  $(m-\tau) \times (m-\tau) $,
the center block is $(2\tau-m) \times (2\tau -m)$ and 
the rest are either $(2\tau -m)\times (m-\tau)$  or $(m-\tau)\times (2\tau-m)$. 
As a result, the corresponding partition of $f$  also has unequally sized blocks.

We write 
  \beq \label{fp2}
f=\bigvee_{k,l=0}^{q-1} f^{kl},\quad    f^{kl}= 
\left[
\begin{matrix}
f^{kl}_{00}  & f^{kl}_{10} &  f^{kl}_{20}   \\
f^{kl}_{01}&f^{kl}_{11}&f^{kl}_{21}\\
f^{kl}_{02}  &f^{kl}_{12}& f^{kl}_{22} 
\end{matrix}
\right]\in \IC^{m\times m}
\eeq
where, for $i,j=0,1,2,\quad k,l=0,\cdots, q-1$,
\beqn
f^{kl}_{2j}=&f^{k+1,l}_{0j},\quad 
f^{k,l}_{i2}=&f^{k, l+1}_{i0}. 
\eeqn

 \commentout{%09/23/2018
Let $\Phi$ denote the 2D oversampled Fourier transforms divided  into nine blocks corresponding to the block structure of $\cM^{00}$
\[ \Phi=
\left[
\begin{matrix}
\Phi_{00}  &  \Phi_{10} &\Phi_{20}\\
\Phi_{01} &\Phi_{11}&\Phi_{21}\\
\Phi_{02}&\Phi_{12}&\Phi_{22}
\end{matrix}
\right].
\]
The diffracted field  $H$ can be partitioned into $q\times q$ blocks, $[H_{ij}]$,
where
\[ 
H_{ij}=\sum_{k,l=0}^2
\Phi_{kl}(\mu^{i-1,j-1}_{kl}\odot f^{i+k,j+l}), \quad i,j =0,1,\ldots, q-1
\]
where $f_{i+k,j+l}$ is cyclically defined with respect to the subscript. 
}

\section{Affine block phases}\label{sec:drift}

Let $S$ be any cyclic subgroup of $\cT$ generated by $\bv$, i.e. $S:=\{\bt_j=j\bv:j=0,\dots,s-1\}$, of order $s$, i.e.  $s\bv=0\mod\, n$. 
For ease of notation, denote by $\mu^k, f^k, \nu^k, g^k$ and $M^{k}$ for the respective $\bt_k$-shifted quantities.
\medskip
\begin{theorem}\label{lem2}  As in \eqref{7.1}, suppose that
\beq
\label{25}
\nu^k(\bn)g^k(\bn)=e^{\im \theta_k}\mu^k(\bn) f^k(\bn)
\eeq
{for all $\bn\in \cM^k$ and $k=0,\dots,s-1$. }
 If, for all $ k=0,\dots,s-1,$
\beq
\label{33}
 \cM^{k}\cap \cM^{k+1}\cap \supp(f) \cap(\supp(f)+ {\bv})\neq \emptyset,
\eeq
then the sequence 
 $\{\theta_0,\theta_1,\dots,\theta_{s-1}\}$ is an arithmetic progression  where $\Delta\theta=\theta_k-\theta_{k-1}$  is an integer multiple of $2\pi/s$.
%Here $\oplus$ denotes the cyclic addition with respect to the periodic boundary condition on $\IZ_n^2$.
\end{theorem}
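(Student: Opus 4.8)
The plan is to exploit the overlap between consecutive shifted probes to relate $\theta_k$, $\theta_{k+1}$, and the behaviour of the ratio $g/f$ and $\nu/\mu$ on the overlap region. First I would set up the key local object: on $\cM^k$, equation \eqref{25} says $\nu^k \odot g^k = e^{\im\theta_k}\mu^k\odot f^k$, i.e. on the set where $\mu^k f^k\neq 0$ we have $\frac{\nu^k(\bn)}{\mu^k(\bn)}\cdot\frac{g^k(\bn)}{f^k(\bn)} = e^{\im\theta_k}$. Since $\nu^k$ is the $\bt_k$-shift of $\nu^0$ and $\mu^k$ the $\bt_k$-shift of $\mu^0$, the function $\bn\mapsto \nu^k(\bn)/\mu^k(\bn)$ equals $r(\bn-\bt_k)$ where $r:=\nu^0/\mu^0$ on $\cM^{00}$. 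Similarly $g^k/f^k$ is just the restriction of the global ratio $\rho:=g/f$ (defined on $\supp f$) to $\cM^k$. So on $\cM^k\cap\supp f$ (intersected with the support of $\mu^k$, which I will assume/arrange contains $\cM^k$ by the zero-extension convention, or restrict to where $\mu^k\neq0$) we get $r(\bn-\bt_k)\,\rho(\bn) = e^{\im\theta_k}$.

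Next I would compare the $k$ and $k+1$ relations on the overlap $\cM^k\cap\cM^{k+1}$. Pick, using hypothesis \eqref{33}, a pixel $\bn_0\in \cM^k\cap\cM^{k+1}\cap\supp(f)\cap(\supp(f)+\bv)$. The condition $\bn_0\in\supp(f)+\bv$ means $\bn_0-\bv\in\supp f$, hence $\bn_0-\bv$ is in the domain of $\rho$; but also $\bn_0-\bv = \bn_0 - \bt_{k+1}+\bt_k$, and I want to relate $r$ evaluated at $\bn_0-\bt_{k+1}$ and at $(\bn_0-\bv)-\bt_k = \bn_0-\bt_{k+1}$. That is the crucial coincidence: shifting the pixel back by $\bv$ converts an evaluation of $r$ against the shift $\bt_k$ into an evaluation against $\bt_{k+1}$ at the \emph{same} argument. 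Concretely: from the $k$-relation at $\bn_0$, $r(\bn_0-\bt_k)\rho(\bn_0)=e^{\im\theta_k}$; from the $(k+1)$-relation at $\bn_0$, $r(\bn_0-\bt_{k+1})\rho(\bn_0)=e^{\im\theta_{k+1}}$; and from the $k$-relation at $\bn_0-\bv$, $r(\bn_0-\bv-\bt_k)\rho(\bn_0-\bv)=e^{\im\theta_k}$, where $\bn_0-\bv-\bt_k=\bn_0-\bt_{k+1}$. Dividing appropriately, $\dfrac{e^{\im\theta_{k+1}}}{e^{\im\theta_k}} = \dfrac{r(\bn_0-\bt_{k+1})\rho(\bn_0)}{r(\bn_0-\bt_{k+1})\rho(\bn_0-\bv)} = \dfrac{\rho(\bn_0)}{\rho(\bn_0-\bv)}$. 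So $\Delta\theta := \theta_{k+1}-\theta_k$ satisfies $e^{\im\Delta\theta} = \rho(\bn_0)/\rho(\bn_0-\bv)$, and the point is that the right-hand side does \emph{not} depend on $k$ in a way that could vary — I need to argue it is the same quantity for every $k$. To see that, note that for each $k$ the overlap condition supplies some valid pixel, and the identity $e^{\im(\theta_{k+1}-\theta_k)} = \rho(\bn)/\rho(\bn-\bv)$ must hold at \emph{every} admissible $\bn$ in that overlap; chaining the relation $\rho(\bn) = e^{\im(\theta_{k+1}-\theta_k)}\rho(\bn-\bv)$ along the cyclic orbit $\bn,\bn-\bv,\bn-2\bv,\dots$ (which by connectivity of the overlaps returns to $\bn$ after $s$ steps) forces $\sum_{k}(\theta_{k+1}-\theta_k)$, taken around the cycle, to be a multiple of $2\pi$, and moreover forces all the increments to be equal.

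To make the "all increments equal" step clean I would phrase it as follows. Define $\phi_k := \theta_k - k\Delta_1$ where $\Delta_1 := \theta_1-\theta_0$; I claim $\phi_k\equiv\phi_0=\theta_0$. Iterating the overlap identity gives $\rho(\bn) = e^{\im\theta_k}\,/\,[r(\bn-\bt_k)]$-type expressions, but more directly: the relation $r(\bn-\bt_k)\rho(\bn) = e^{\im\theta_k}$ valid for all $k$ and all $\bn\in\cM^k\cap\supp f$ with $\mu^k(\bn)\neq0$, evaluated at a $\bv$-translated chain of pixels, yields $e^{\im(\theta_{k}-\theta_{k-1})} = \rho(\bn_{k-1})/\rho(\bn_{k-1}-\bv)$ where $\bn_{k-1}$ lies in the $(k-1,k)$ overlap. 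Since consecutive overlaps share pixels along the $\bv$-direction (this is exactly what \eqref{33} guarantees, because $\cM^k\cap\cM^{k+1}$ is a translate by $\bv$ of $\cM^{k-1}\cap\cM^k$), one can transport the value and conclude $\rho(\bn)/\rho(\bn-\bv)$ is a global constant $e^{\im c}$ on the union of these overlaps, whence $\theta_k - \theta_{k-1} = c$ for all $k$ — an arithmetic progression. Finally, periodicity: after $s$ steps the cyclic shift $s\bv\equiv 0\pmod n$ brings everything back, so $\rho(\bn) = e^{\im s c}\rho(\bn)$ on a nonempty set, giving $sc\in 2\pi\IZ$, i.e. $\Delta\theta = c$ is an integer multiple of $2\pi/s$, as claimed.

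The main obstacle I anticipate is the bookkeeping around the periodic/cyclic boundary condition and the restriction to pixels where all four functions ($\mu^k$, $f^k$, $\nu^k$, $g^k$, equivalently $r$ and $\rho$) are nonzero: hypothesis \eqref{33} is precisely engineered to hand me such pixels in every overlap, but I must check that the chain of pixels $\bn, \bn-\bv, \dots$ can be chosen to stay inside successive supports and that the "wrap-around" after $s$ steps is consistent with the zero-extension and periodic conventions stated before the theorem. The algebra itself (dividing the three instances of \eqref{25} to cancel $r(\bn-\bt_{k+1})$) is routine; the care is entirely in the domains. I would also remark that the statement only asserts the \emph{increments} are constant mod nothing and lie in $\frac{2\pi}{s}\IZ$ — there is still the free choice of $\theta_0$ and of the common difference among the $s$ allowed values, which is the "two degrees of freedom" mentioned in the introduction; no further normalization is needed in the proof.
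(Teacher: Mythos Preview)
Your approach is essentially the paper's: compare instances of \eqref{25} at a pixel $\bn_0$ and at its shift $\bn_0-\bv$ across adjacent blocks, using that $\nu^k/\mu^k$ is a translate of a single function $r$. The paper phrases this on the probe side (its (30)--(32$'$)) while you work with both $r=\nu^{0}/\mu^{0}$ and $\rho=g/f$; these are interchangeable.

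One point to tighten: your step ``$\rho(\bn)/\rho(\bn-\bv)$ is a global constant on the union of overlaps, hence the increments agree'' is not justified as written, because the sets $\cM^{k}\cap\cM^{k+1}$ for different $k$ need not share any pixel (think of the over-shifting case $\tau>m/2$), so you cannot transport the $\rho$-ratio directly from one overlap to the next. The clean fix is already implicit in your three equations: add a \emph{fourth} instance, the $(k-1)$-relation evaluated at $\bn_0-\bv$. Since $\bn_0-\bv-\bt_{k-1}=\bn_0-\bt_k$, this reads $r(\bn_0-\bt_k)\,\rho(\bn_0-\bv)=e^{\im\theta_{k-1}}$; dividing your equation (4) by it gives $e^{\im(\theta_k-\theta_{k-1})}=\rho(\bn_0)/\rho(\bn_0-\bv)$, which is the \emph{same} ratio at the \emph{same} pixel you already found for $e^{\im(\theta_{k+1}-\theta_k)}$ from (5)/(6). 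Hypothesis \eqref{33} is precisely what puts $\bn_0-\bv$ in $\cM^{k-1}\cap\cM^k\cap\supp(f)$. This is exactly the comparison the paper makes between its (32) and (32$'$), just in your notation. Your periodicity argument for $s\Delta\theta\in2\pi\IZ$ is fine.
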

\medskip
\begin{rmk}
If $f$ has a full support, i.e. $\supp(f)=\IZ^2_n$, then \eqref{33} holds for any step size $\tau<m$ (i.e. positive overlap). 

\end{rmk}
\begin{proof}

Rewriting \eqref{25} in the form
\beq
\label{25'}
\nu^{k+1}(\bn) g^{k+1}(\bn)&=&e^{\im\theta_{k+1}}\mu^{k+1}(\bn) f^{k+1}(\bn)
\eeq
and substituting \eqref{25} into 
 \eqref{25'}  for $\bn\in \cM^k\cap \cM^{k+1}$, we have
\beqn
e^{\im\theta_{k}}f^k(\bn) \mu^{k}(\bn)/\nu^{k}(\bn)=e^{\im\theta_{k+1}}\mu^{k+1}(\bn)/\nu^{k+1}(\bn) f^{k+1}(\bn)
\eeqn
and hence for all $ \bn\in \cM^{k}\cap \cM^{k+1}\cap \supp(f)$,
\beq\label{30}
e^{\im\theta_{k}}\mu^{k}(\bn)/\nu^{k}(\bn)=e^{\im\theta_{k+1}}\mu^{k+1}(\bn)/\nu^{k+1}(\bn).
\eeq

For all $ j=0,\dots, s-1,$ substituting 
\beq
\label{31}
\nu^{j}(\bn)=\nu^{j+1}(\bn+ \bv),\quad  \mu^{j}(\bn)=\mu^{j+1}(\bn+ \bv),
\eeq
 into \eqref{30}, we have that for $\bn\in \cM^{k}\cap \cM^{k+1}\cap \supp(f)$
\beqn
\lefteqn{e^{\im \theta_{k}} \mu^{k+1}(\bn+ \bv)/\nu^{k+1}(\bn+ \bv)}\\
&=&e^{\im \theta_{k+1}}\mu^{k+2}(\bn+ \bv)/\nu^{k+2}(\bn+ \bv),\eeqn
or equivalently
\beq
\label{32}e^{\im \theta_{k}} \mu^{k+1}(\bn)/\nu^{k+1}(\bn)&=&e^{\im \theta_{k+1}}\mu^{k+2}(\bn)/\nu^{k+2}(\bn),\\
\forall  \bn&\in& \cM^{k+1}\cap \cM^{k+2}\cap (\supp(f)+ \bv)\nn
\eeq
%where $T_{\bs_{k+1}}\cM^{k+1}$ is not necessarily equal to $\cM^{k+2}=T_{\bs_{k+2}}\cM^{k+1}$. 
On the other hand, \eqref{30} also implies
\beq
\label{32'}e^{\im \theta_{k+1}} \mu^{k+1}(\bn)/\nu^{k+1}(\bn)&=&e^{\im \theta_{k+2}}\mu^{k+2}(\bn)/\nu^{k+2}(\bn),\\\forall \bn &\in & \cM^{k+1}\cap \cM^{k+2}\cap \supp(f).\nn
\eeq
Hence, if 
\beqn
 \cM^{k}\cap \cM^{k+1}\cap \supp(f) \cap(\supp(f)+ \bv)\neq \emptyset
\eeqn
then \eqref{32'} and \eqref{32} imply that
\beq\label{34}
e^{\im\theta_{k+1}}e^{-\im\theta_{k}}=e^{\im\theta_{k}} e^{-\im\theta_{k-1}},\quad \forall k =0,\dots,s-1
\eeq
and hence $\Delta\theta=\theta_k-\theta_{k-1}$ is independent of $k$.
In other words, $\{\theta_0,\theta_1,\theta_2\dots\}$ is an arithmetic progression.

Moreover,  the periodic boundary condition and the fact that $s\bv=0 \mod 2\pi$ imply that
 $s\Delta \theta$ is an integer multiple of $2\pi$. \end{proof}

Applying Theorem \ref{lem2} to the two-generator group $\cT$ of the raster scan we have the following result.
\medskip
\begin{cor}\label{cor1}
For the full raster scan $\cT$,  the block phases have the profile 
\beq
\label{a3} \theta_{kl}=\theta_{00}+\br\cdot(k,l),\quad k,l=0,\dots,q-1, 
\eeq
for some $\theta_{00}\in \IR$ and $  \br=(r_1,r_2)$ where $r_1$ and $r_2$ are integer multiples
of $2\pi/q$. 

\end{cor}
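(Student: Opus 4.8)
The plan is to deduce Corollary \ref{cor1} from Theorem \ref{lem2} by applying the theorem separately to each of the two coordinate directions of the lattice $\cT$. First I would fix a row index $l$ and consider the cyclic subgroup of $\cT$ generated by $\bv_1=(\tau,0)$; it has order $q$ since $q\bv_1=(n,0)\equiv 0 \bmod n$. Provided the nonemptiness hypothesis \eqref{33} holds for consecutive horizontal shifts (which it does under the stated overlap/support assumptions, e.g.\ automatically when $\supp(f)=\IZ_n^2$ by the Remark), Theorem \ref{lem2} gives that $k\mapsto \theta_{kl}$ is an arithmetic progression with common difference $r_1(l)$, an integer multiple of $2\pi/q$. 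Symmetrically, fixing a column index $k$ and using the subgroup generated by $\bv_2=(0,\tau)$, the map $l\mapsto\theta_{kl}$ is an arithmetic progression with common difference $r_2(k)$, again an integer multiple of $2\pi/q$.

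The key remaining step is to show that the horizontal increment $r_1(l)$ does not depend on $l$ and the vertical increment $r_2(k)$ does not depend on $k$, so that the two-dimensional profile is genuinely affine, namely $\theta_{kl}=\theta_{00}+r_1 k+r_2 l$. I would obtain this from a commuting-rectangle argument: for any $k,l$, going from $\theta_{00}$ to $\theta_{k+1,l+1}$ by first moving horizontally then vertically must agree with first moving vertically then horizontally. Concretely, $\theta_{k+1,l}-\theta_{kl}=r_1(l)$ and $\theta_{k+1,l+1}-\theta_{k+1,l}=r_2(k+1)$, while $\theta_{k,l+1}-\theta_{kl}=r_2(k)$ and $\theta_{k+1,l+1}-\theta_{k,l+1}=r_1(l+1)$; adding along the two paths forces $r_1(l)+r_2(k+1)=r_2(k)+r_1(l+1)$, i.e.\ $r_1(l+1)-r_1(l)=r_2(k+1)-r_2(k)$. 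Since the left side is independent of $k$ and the right side is independent of $l$, both sides equal a common constant $c$; but then $r_1$ is itself an arithmetic progression in $l$ of period $q$ (by periodic boundary conditions $r_1(l+q)=r_1(l)$), so $qc\equiv 0$, and more importantly iterating $q$ times gives $r_1(l+q)-r_1(l)=qc$, forcing $c=0$ modulo the relevant identification. Hence $r_1(l)\equiv r_1$ and $r_2(k)\equiv r_2$ are constants, and setting $\br=(r_1,r_2)$ yields \eqref{a3}, with $r_1,r_2\in (2\pi/q)\IZ$ inherited from the one-dimensional statements.

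I expect the main obstacle to be the bookkeeping around the periodic (wrap-around) boundary condition: one must make sure that Theorem \ref{lem2} is being applied to an honest cyclic subgroup of order exactly $q$ in each direction, that the overlap condition \eqref{33} is satisfied across the wrap-around pair $(k=q-1)\leftrightarrow(k=0)$ as well, and that the ``integer multiple of $2\pi/q$'' conclusion is consistent when one closes the loop around the torus in both directions simultaneously. The argument that $c=0$ (rather than merely $qc\in 2\pi\IZ$) is the delicate point; it should follow because the phases $\theta_{kl}$ are well-defined real numbers on the finite grid (not just mod $2\pi$) once a branch is fixed, so the discrete ``mixed second difference'' $r_1(l+1)-r_1(l)$ must actually vanish, but I would want to state this carefully, perhaps by noting that $\sum_{l=0}^{q-1}\big(r_1(l+1)-r_1(l)\big)=0$ telescopes and equals $qc$, hence $c=0$. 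Everything else is routine: the two one-directional applications of Theorem \ref{lem2} are immediate, and the affine conclusion is then just linear algebra over the increments.
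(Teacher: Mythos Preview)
Your plan is sound up to the point you yourself flag as delicate: the claim that the common constant $c=r_1(l+1)-r_1(l)$ must vanish. This step does not go through from the \emph{statement} of Theorem~\ref{lem2} alone. The block phases $\theta_{kl}$ are only determined modulo $2\pi$, so the commuting-rectangle identity $r_1(l)+r_2(k+1)=r_2(k)+r_1(l+1)$ holds only mod $2\pi$; your telescoping sum then yields $qc\equiv 0\ (\mathrm{mod}\ 2\pi)$, which is automatic since $c\in(2\pi/q)\IZ$ and gives no information. A concrete obstruction: for $q=2$ the array $\theta_{00}=\theta_{10}=\theta_{01}=0$, $\theta_{11}=\pi$ has every row and every column arithmetic mod $2\pi$ with increments in $(2\pi/2)\IZ$, satisfies your rectangle identity mod $2\pi$, yet is \emph{not} affine. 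Fixing real branches does not help, because Theorem~\ref{lem2} does not assert that the chosen real $\theta_{kl}$ form a \emph{real} arithmetic progression along rows; it only asserts this modulo $2\pi$.

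The fix is to dip one level into the \emph{proof} of Theorem~\ref{lem2} rather than use its statement as a black box. Equation~\eqref{30} (after the substitution $\bmm=\bn-\bt_{kl}$) reads
\[
e^{\im(\theta_{k+1,l}-\theta_{kl})}=\frac{\mu^{00}(\bmm)/\nu^{00}(\bmm)}{\mu^{00}(\bmm-\bv_1)/\nu^{00}(\bmm-\bv_1)},\qquad \bmm\in\cM^{00}\cap(\cM^{00}+\bv_1)\cap\supp(f),
\]
whose right-hand side depends only on $\bv_1$ and the probe ratio, not on $(k,l)$. Hence $\theta_{k+1,l}-\theta_{kl}\equiv r_1$ is constant in \emph{both} indices; likewise with $\bv_2$. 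This is what the paper means by ``applying Theorem~\ref{lem2} to the two-generator group $\cT$'': the intermediate identity~\eqref{30} already delivers a single increment per generator, and the affine profile \eqref{a3} follows immediately without any rectangle or telescoping argument. (A minor separate point: your ``fix a row index $l$ and apply Theorem~\ref{lem2} to the subgroup generated by $\bv_1$'' literally treats the coset $\{\bt_{kl}:k\}$, not the subgroup; the proof of Theorem~\ref{lem2} is translation-invariant and covers cosets, but you should say so.)
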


\commentout{%10/05/2018
The following examples show that in addition to the affine phase ambiguity another ambiguity is inherent to the raster scan in connection to Theorem \ref{lem2} as the arithmetically progressing block phases make positive
and negative imprints on the object and phase estimates, respectively.

\medskip
\begin{ex}\label{ex4} Let $\bv=(m/2,0), q=2n/m$ ($m$ is even). To simplify the notation for partition of the object and the probe we write 
 \beqn\label{28}
 f &=&
\lt[\begin{matrix}
f_{0}  & f_{1} & \cdots  &f_{q-1}  
\end{matrix}\rt]\\
 \mu^0&=&
\left[
\begin{matrix}
\mu_{0}  & \mu_{1}
\end{matrix}
\right]\\
 g&=&
\left[
\begin{matrix}
f_{0}  & e^{\im 2\pi/q} f_{1} & \cdots &  e^{\im 2\pi (q-1)/q} f_{q-1}  
\end{matrix}
\right]\\
\nu^0&=&
\left[
\begin{matrix}
\mu_{0}  &e^{-\im 2\pi/q}  \mu_{1} 
\end{matrix}
\right]\label{29}
\eeqn
 where $f_j, g_j\in \IC^{n\times m/2}, \mu_i, \nu_i\in \IC^{m\times m/2}$.  It is easy to see that 
\beq
\label{10.1}
\nu^k \odot g^k= e^{\im 2\pi k/q} \mu^k\odot f^k,\quad k=0,\dots,q-1,
\eeq
where for $ k=0,\dots, q-2,$
\beqn
f^k&=&[f_k,f_{k+1}],\\
g^k&=&[g_k,g_{k+1}]=[e^{\im2\pi k/q}f_k, e^{\im2\pi (k+1)/q}f_{k+1}]
\eeqn
and
\beqn
f^{q-1}&=&[f_{q-1}, f_0],\\
 g^{q-1}&=& [g_{q-1},g_0]=[e^{\im2\pi (q-1)/q}f_{q-1}, f_{0}].
\eeqn

More generally, consider $\bv=(\tau,0)$ for $\tau=m/p, q=pn/m$ where $p$ divides $m$:
 \beq\label{28'}
 f&=&
\lt[\begin{matrix}
f_{0}  & f_{1} & \cdots  &f_{q-1}  
\end{matrix}\rt]\\
 \mu^0&=&
\left[
\begin{matrix}
\mu_{0}  & \mu_{1}&\cdots &\mu_{p-1}
\end{matrix}
\right]\\
 g&=&
\left[
\begin{matrix}
f_{0}  & e^{\im 2\pi/q} f_{1} & \cdots &  e^{\im 2\pi (q-1)/q} f_{q-1}  
\end{matrix}
\right]\\
\nu^0&=&
\left[
\begin{matrix}
\mu_{0}  &e^{-\im 2\pi/q}  \mu_{1} &\cdots &e^{-\im 2\pi (p-1)/q}\mu_{p-1}
\end{matrix}
\right]\label{29'}
\eeq
where $f_j, g_j\in \IC^{n\times m/p}$ and $\mu_i, \nu_i\in \IC^{m\times m/p}$. For this sampling scheme, the same relation \eqref{10.1} holds. 
%For $q\delta>2(p-1)$, $\nu^0/\mu^0$ satisfies the  probe phase constraint.  
\end{ex}

\commentout{
\begin{rmk}\label{rmk9.2}
While
the ambiguity in Example \ref{ex4} persists under 
the prior information $f_0$ (since $g_0=f_0$), the construction $g(\bn)=f(\bn) \exp(\im \bw\cdot\bn)$
may not be admissible under some partial prior information on $f$ unless the prior is of the form
that $f=0$ at certain locations (in that case $g=0$ at the same locations and the ambiguity persists). 
\end{rmk}
}

A 2D version is as follows. 
\medskip
\begin{ex}\label{ex5} For $q=3, \tau=m/2$, let
 \beq
 f&=&
\lt[\begin{matrix}
f_{00}&f_{10}&f_{20}\\
f_{01}& f_{11}  & f_{21}   \\
f_{02} & f_{12} & f_{22}
\end{matrix}\rt]\\
 g&=&
\left[
\begin{matrix}
f_{00}  & e^{\im 2\pi/3} f_{10} & e^{\im 4\pi/3} f_{20}   \\
e^{\im 2\pi/3} f_{01} &e^{\im 4\pi/3} f_{11} &f_{21}  \\
e^{\im 4\pi/3} f_{02}  & f_{12}& e^{\im 2\pi/3}  f_{22}  
\end{matrix}
\right]
\eeq
be  the object and its reconstruction, respectively, where $f_{ij}, g_{ij}\in \IC^{n/3\times n/3}$.  Let 
 \beq\label{vio}
\mu^{kl}=
\left[
\begin{matrix}
\mu^{kl}_{00}  & \mu^{kl}_{10}    \\
\mu^{kl}_{01}&\mu^{kl}_{11}
\end{matrix}
\right],\quad \nu^{kl}=
\left[
\begin{matrix}
\mu^{kl}_{00}  &e^{-\im 2\pi/3}  \mu^{kl}_{10}    \\
e^{-\im 2\pi/3} \mu^{kl}_{01}&e^{-\im 4\pi/3} \mu^{kl}_{11}
\end{matrix}
\right], 
\eeq
$ k,l=0,1,2,$
be the probe and reconstruction, respectively, where $\mu^{kl}_{ij}, \nu^{kl}_{ij}\in \IC^{n/3\times n/3}$. 

It is verified easily that $\nu^{ij}\odot g^{ij}=e^{\im (i+j)2\pi/3} \mu^{ij}\odot f^{ij}.$
\end{ex}

}

\section{Raster scan ambiguities}
\label{sec:recovery}
In this section we give a complete characterization of the raster scan ambiguities other than the scaling factor
and the affine phase ambiguity \eqref{lp1}-\eqref{lp2}, including the arithmetically progressing phase factor  inherited from the block phases and
the raster grid pathology which
has  a $\tau$-periodic structure of $\tau\times\tau$ degrees of freedom.  We will
use the notation in Section \ref{sec:lattice}.

Before we state the general result. Let us consider two simple examples to illustrate each
type of ambiguity separately.

The first example shows an ambiguity resulting from the arithmetically progressing block phases 
which make positive
and negative imprints on the object and phase estimates, respectively. 

\medskip
\begin{ex}\label{ex6} For $q=3, \tau=m/2$, let
 \beqn
 f&=&
\lt[\begin{matrix}
f_{00}&f_{10}&f_{20}\\
f_{01}& f_{11}  & f_{21}   \\
f_{02} & f_{12} & f_{22}
\end{matrix}\rt]\\
 g&=&
\left[
\begin{matrix}
f_{00}  & e^{\im 2\pi/3} f_{10} & e^{\im 4\pi/3} f_{20}   \\
e^{\im 2\pi/3} f_{01} &e^{\im 4\pi/3} f_{11} &f_{21}  \\
e^{\im 4\pi/3} f_{02}  & f_{12}& e^{\im 2\pi/3}  f_{22}  
\end{matrix}
\right]
\eeqn
be  the object and its reconstruction, respectively, where $f_{ij}\in \IC^{n/3\times n/3}$.  Let 
 \beqn
\mu^{kl}=
\left[
\begin{matrix}
\mu^{kl}_{00}  & \mu^{kl}_{10}    \\
\mu^{kl}_{01}&\mu^{kl}_{11}
\end{matrix}
\right],\quad \nu^{kl}=
\left[
\begin{matrix}
\mu^{kl}_{00}  &e^{-\im 2\pi/3}  \mu^{kl}_{10}    \\
e^{-\im 2\pi/3} \mu^{kl}_{01}&e^{-\im 4\pi/3} \mu^{kl}_{11}
\end{matrix}
\right], 
\eeqn
$ k,l=0,1,2,$
be the $(k,l)$-th shift of the probe and estimate, respectively, where $\mu^{kl}_{ij}\in \IC^{n/3\times n/3}$. 

Let $f^{ij}$ and $g^{ij}$ be the part of the object and estimate illuminated by $\mu^{ij}$ and $\nu^{ij}$, respectively. It is verified easily that $\nu^{ij}\odot g^{ij}=e^{\im (i+j)2\pi/3} \mu^{ij}\odot f^{ij}.$
\end{ex}

The next example illustrates the periodic artifact called raster grid pathology. 

\medskip
\begin{ex}\label{ex5} For $q=3, \tau=m/2$ and any  $\psi\in \IC^{{n\over 3}\times {n\over 3}}$, let 
 \beqn
 f&=&
\lt[\begin{matrix}
f_{00}&f_{10}&f_{20}\\
f_{01}& f_{11}  & f_{21}   \\
f_{02} & f_{12} & f_{22}
\end{matrix}\rt]\\
 g&=&
\left[
\begin{matrix}
e^{-\im \psi}\odot f_{00}  & e^{-\im \psi} \odot f_{10} & e^{-\im \psi} \odot f_{20}   \\
e^{-\im \psi}\odot f_{01} &e^{-\im\psi} \odot f_{11} & e^{-\im\psi}\odot f_{21}  \\
e^{-\im \psi}\odot  f_{02}  & e^{-\im\psi}\odot  f_{12}& e^{-\im \psi} \odot  f_{22}  
\end{matrix}
\right]
\eeqn
be  the object and its reconstruction, respectively, where $f_{ij}\in \IC^{n/3\times n/3}$.  Let 
 \beqn
\mu^{kl}=
\left[
\begin{matrix}
\mu^{kl}_{00}  & \mu^{kl}_{10}    \\
\mu^{kl}_{01}&\mu^{kl}_{11}
\end{matrix}
\right],\quad \nu^{kl}=
\left[
\begin{matrix}
e^{\im\psi}\odot \mu^{kl}_{00}  &e^{\im \psi} \odot \mu^{kl}_{10}    \\
e^{\im \psi}\odot \mu^{kl}_{01}&e^{\im \psi} \odot \mu^{kl}_{11}
\end{matrix}
\right], 
\eeqn
$ k,l=0,1,2,$
be the $(k,l)$-th shift of the probe and estimate, respectively, where $\mu^{kl}_{ij}\in \IC^{n/3\times n/3}$. 

Let $f^{ij}$ and $g^{ij}$ be the part of the object and estimate illuminated by $\mu^{ij}$ and $\nu^{ij}$, respectively. It is verified easily that $\nu^{ij}\odot g^{ij}= \mu^{ij}\odot f^{ij}.$
\end{ex}

The combination of the above two types of ambiguity gives rise to the general
ambiguities for blind ptychography with the raster scan as stated next. 

\medskip
\begin{thm} \label{thm:recovery}  Suppose that $\supp(f)=\IZ_n^2$. 
Consider the raster scan $\cT$ and suppose that an object estimate $g$ and a probe estimate $\nu^{00}$ satisfy the relation
 \beq
 \label{same}
\nu^{kl}\odot g^{kl}=e^{\im\theta_{kl}}\mu^{kl}\odot f^{kl},\quad\theta_{kl}=\theta_{00}+\br\cdot(k,l)
\eeq
as given by Theorem \ref{lem2}. 

Then the following statements hold. 

{\bf (I).} For $\tau\le m/2$, if 
\beq
\label{under1}
\nu^{00}_{00}=e^{\im\psi}\odot\mu^{00}_{00},\quad \psi\in \IC^{\tau\times\tau}, 
\eeq
then  
\beq
\label{a1}
\nu^{00}_{kl}&=&e^{-\im\br\cdot (k,l)}e^{\im\psi}\odot\mu^{00}_{kl},\quad k,l=0,\dots,p-1\\
 g_{kl}&=& e^{\im \theta_{00}}e^{\im\br\cdot (k,l)}e^{-\im\psi}\odot f_{kl},\quad k,l=0,\dots,q-1.\label{a2}
\eeq

{\bf (II).}  For $\tau>m/2$, 
if \beq
\label{over2}
\left[
\begin{matrix}
\nu^{00}_{00}  & \nu^{00}_{10}  \\
\nu^{00}_{01}&\nu^{00}_{11}
\end{matrix}\right] =e^{\im\psi}\odot
\left[
\begin{matrix}
\mu^{00}_{00}  & \mu^{00}_{10}  \\
\mu^{00}_{01}&\mu^{00}_{11}\\
\end{matrix}
\right]
\eeq
for some  
\[
\psi=
\lt[\begin{matrix}
\psi_{00}& \psi_{10}\\
\psi_{01}&\psi_{11}
\end{matrix}\rt]
\in \IC^{\tau\times\tau},
\]
then 
\beq\label{84}
\left[
\begin{matrix}
g^{kl}_{00}  & g^{kl}_{10}   \\
g^{kl}_{01}&g^{kl}_{11}
\end{matrix}
\right]
=e^{\im \theta_{00}}e^{\im\br\cdot (k,l)}e^{-\im\psi}\odot
\left[
\begin{matrix}
f^{kl}_{00}  & f^{kl}_{10}   \\
f^{kl}_{01}&f^{kl}_{11}
\end{matrix}
\right]
\eeq
for all $k,l=0,\dots,q-1$. 
Moreover, 
\beq
\label{a7}
\nu^{00}_{2j}&=&e^{-\im r_1}e^{\im\psi_{0j}}\odot \mu^{00}_{2j}, \quad j=0,1\\
\nu^{00}_{j2}&=&e^{-\im r_2}e^{\im\psi_{j0}}\odot \mu^{00}_{j2}, \quad j=0,1\label{a8}\\
\nu^{00}_{22}&=& e^{-\im (r_1+r_2)}e^{\im\psi_{00}}\odot \mu^{00}_{22}\label{a9}
\eeq
and hence
\beq
\label{a10}
g^{kl}_{2j}&=&e^{\im\theta_{00}}e^{\im \br\cdot ({k+1,l})}e^{-\im\psi_{0j}}\odot f^{kl}_{2j}, \quad j=0,1\\
g^{kl}_{j2}&=&e^{\im\theta_{00}}e^{\im\br\cdot ({k,l+1})}e^{-\im\psi_{j0}}\odot f^{kl}_{j2}, \quad j=0,1\label{a11}\\
g^{kl}_{22}&=& e^{\im\theta_{00}}e^{\im\br\cdot ({k+1,l+1})}e^{-\im\psi_{00}}\odot f^{kl}_{22}. \label{a12}
\eeq

\end{thm}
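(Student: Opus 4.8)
### Proof proposal for Theorem \ref{thm:recovery}

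The plan is to exploit the local-uniqueness relation \eqref{same} together with the overlap structure of the raster scan, propagating phase information from one block to its neighbors. The essential observation is that from \eqref{same} one gets, on each block $\cM^{kl}$,
\[
\nu^{kl}_{ij}\odot g^{kl}_{ij}=e^{\im\theta_{kl}}\mu^{kl}_{ij}\odot f^{kl}_{ij},
\]
and since $\supp(f)=\IZ_n^2$ forces $f$ (hence every $f^{kl}_{ij}$) to be nowhere zero, one may divide and conclude that the ratio $g^{kl}_{ij}/f^{kl}_{ij}$ equals $e^{\im\theta_{kl}}$ times the ratio $\mu^{kl}_{ij}/\nu^{kl}_{ij}$. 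The strategy in both cases is: (1) read off the pointwise ratios on the distinguished block(s) where the hypothesis \eqref{under1} resp. \eqref{over2} pins down $\nu^{00}$; (2) use the shift identities $\mu^{kl}(\bn)=\mu^{00}(\bn-\bt_{kl})$ and the corresponding identities for $\cM$, $f$, $g$ together with the block partitions \eqref{mu}--\eqref{fp} resp. \eqref{mu2}--\eqref{fp2} to see that the same physical pixel is covered by different sub-blocks of different shifts; (3) chain these identities, using the affine profile $\theta_{kl}=\theta_{00}+\br\cdot(k,l)$, to determine every $\nu^{00}_{kl}$ and every $g_{kl}$.

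\textbf{Part (I), $\tau\le m/2$.} Here the object splits into the non-overlapping blocks $f_{kl}$ of size $\tau\times\tau$, and the probe $\mu^{00}$ into $p\times p$ sub-blocks $\mu^{00}_{kl}$ of the same size, with $\tau=m/p$. The block $\cM^{00}_{kl}$ sits over the object block $f_{kl}$. I would argue inductively along each row/column of the probe sub-block grid: the sub-block $\mu^{00}_{k+1,l}$ covers the same object block $f_{k+1,l}$ that the $(1,0)$-shifted probe's sub-block $\mu^{10}_{kl}$ covers, so comparing the two instances of \eqref{same} on that common object block, and using that $g$ is a single global function (so $g_{k+1,l}$ is forced to be consistent across the two equations), yields the recursion $\nu^{00}_{k+1,l}=e^{-\im r_1}\nu^{00}_{kl}\odot(\mu^{00}_{k+1,l}/\mu^{00}_{kl})\cdot(\text{shift of }\mu)$; organizing this carefully gives \eqref{a1}, and then \eqref{a2} follows by substituting back into \eqref{same}. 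The periodic boundary condition \eqref{bc1''} closes the loop consistently because $r_1,r_2$ are multiples of $2\pi/q$ (Corollary \ref{cor1}).

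\textbf{Part (II), $\tau>m/2$.} Now the $3\times3$ partition has unequal block sizes, and only the $2\times2$ "upper-left" corner $\{00,10,01,11\}$ — which has total size $m\times m$ minus the overlap regions, i.e. it is the part that, as $(k,l)$ ranges over all shifts, tiles the object — carries the free function $\psi\in\IC^{\tau\times\tau}$. First establish \eqref{84} exactly as in Part (I): the corner sub-blocks of consecutive shifts overlap object-wise in a way that propagates the single $\psi$ unchanged (no new degrees of freedom appear, since the $2\times2$ corner of $\cM^{k+1,l}$ reuses pixels already constrained by $\cM^{kl}$). Then for the remaining sub-blocks $\mu^{00}_{2j},\mu^{00}_{j2},\mu^{00}_{22}$: by the boundary identities \eqref{bc1'}--\eqref{bc2'} these equal corner sub-blocks of a shifted probe (e.g. $\mu^{00}_{2j}$ is identified with $\mu^{10}_{0j}$ -- a corner sub-block of the $(1,0)$-shift), so the values of $\nu$ on them are already determined by \eqref{84}/\eqref{a1}-type reasoning applied to that shift, giving \eqref{a7}--\eqref{a9} with the expected extra phase $e^{-\im r_1}$, $e^{-\im r_2}$, $e^{-\im(r_1+r_2)}$; finally \eqref{a10}--\eqref{a12} drop out of \eqref{same} once $\nu^{00}$ is known, with the shifted arguments $(k+1,l)$ etc. coming from the identification $f^{kl}_{2j}=f^{k+1,l}_{0j}$ and the affine formula $\theta_{k+1,l}=\theta_{00}+\br\cdot(k+1,l)$.

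\textbf{Main obstacle.} The real work is the bookkeeping in step (2)–(3): correctly matching, for each shift $(k,l)$ and each sub-block index $(i,j)$, \emph{which} object block and \emph{which} probe sub-block of a neighboring shift cover the same pixels, and verifying that the chain of recursions is consistent around the periodic torus (this is where $r_i\in\frac{2\pi}{q}\IZ$ is used) and, in Part (II), that the corner $\psi$ does not pick up extra freedom from the non-corner sub-blocks. I expect no conceptual difficulty beyond what Theorem \ref{lem2} already supplies — it is a matter of setting up the index arithmetic once and reading off all the stated formulas — but getting the over-shifting boundary identifications \eqref{bc1'}--\eqref{bc2'} to mesh with the affine block-phase profile is the step most prone to sign/offset errors and deserves the most care.
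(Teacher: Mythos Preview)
Your plan is correct and follows essentially the same route as the paper: restrict \eqref{same} to individual sub-blocks, use the shift relation $\nu^{kl}(\bn)=\nu^{00}(\bn-\bt_{kl})$ together with the overlap identifications (e.g.\ $\cM^{00}_{2j}=\cM^{10}_{0j}$, $f^{kl}_{2j}=f^{k+1,l}_{0j}$) to propagate from the seed block to its neighbors, and read off the recursion via the affine profile $\theta_{kl}=\theta_{00}+\br\cdot(k,l)$. One small correction: in Part~(II) the identifications you need are the adjacent-block overlap relations stated just after \eqref{fp2}, not the periodic boundary conditions \eqref{bc1'}--\eqref{bc2'} (those handle only the wraparound at $k=q-1$ or $l=q-1$); with that fix your bookkeeping goes through exactly as in the paper.
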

\medskip
\begin{rmk}
Since $\psi$ is any complex $\tau\times \tau$ matrix, \eqref{under1} and \eqref{over2} represent the maximum degrees of ambiguity
over the respective initial sub-blocks. This ambiguity is transmitted to other sub-blocks, forming periodic artifacts called the raster grid pathology.

On top of the periodic artifacts, there is the non-periodic ambiguity inherited from the affine block phase profile. 
 The non-periodic arithmetically progressing  ambiguity  is different from  the affine phase ambiguity \eqref{lp1}-\eqref{lp2} as they manifest on different scales:
the former on the block scale  while the latter on the pixel scale. 
\end{rmk}
\begin{proof} 

{\bf (I).}  For $\tau\le m/2$, recall the decomposition 
\beqn
\nu^{kl}&=&
\left[
\begin{matrix}
\nu^{kl}_{00}  & \nu^{kl}_{10} &\cdots&  \nu^{kl}_{p-1,0}   \\
\nu^{kl}_{01}&\mu^{kl}_{11}&\cdots& \nu^{kl}_{p-1,1}\\
\vdots&\vdots&\vdots&\vdots\\
\nu^{kl}_{0,p-1}  &\nu^{kl}_{1,p-1}&\cdots& \nu^{kl}_{p-1,p-1} 
\end{matrix}
\right],\quad 
 g=
\left[
\begin{array}{ccc}
g_{00}  & \ldots  & g_{q-1,0}   \\
\vdots  & \vdots  &\vdots   \\
g_{0,q-1}  &\ldots   &  g_{q-1, q-1}  
\end{array}
\right], \eeqn
with $\nu^{kl}_{ij},g_{ij}\in \IC^{m/p\times m/p}$, in analogy to \eqref{mu} and  \eqref{fp}.

\[
g_{00}=e^{\im\theta_{00}} e^{-\im\psi}\odot f_{00}
\]
 by restricting  \eqref{same} to $\cM^{00}_{00}$. 
 
 For $\bn\in \cM_{00}^{10}$,  we have 
\beqn
\nu_{00}^{10}\odot g_{10}=e^{\im\theta_{10}} \mu^{10}_{00}\odot f_{10},
\eeqn
by \eqref{same}, and 
\[
\nu_{00}^{10}(\bn)=\nu^{00}_{00}(\bn-(\tau,0))=(e^{\im \psi}\odot\mu_{00}^{00})(\bn-(\tau,0))=(e^{\im\psi}\odot \mu^{10}_{00})(\bn)
\] by \eqref{under1}.
Hence 
\[
g_{10}=e^{\im\theta_{10}}e^{-\im\psi}\odot f_{10}
\]
 implying 
\[
\nu^{00}_{10} \odot g_{10}=e^{\im\theta_{10}}e^{-\im\psi}\nu^{00}_{10}\odot f_{10}=e^{\im\theta_{00}}\mu^{00}_{10}\odot f_{10}
\]
by \eqref{same} and consequently 
\[
\nu_{10}^{00}=e^{\im\theta_{00}}e^{-\im\theta_{10}}e^{\im\psi}\mu^{00}_{10}.
\]

Repeating the same argument for the adjacent blocks in both directions, we obtain
\beqn
\nu^{00}_{kl}&=&e^{\im\theta_{00}}e^{-\im\theta_{kl}}e^{\im\psi}\odot\mu^{00}_{kl}\\
g_{kl}&=&e^{\im\theta_{kl}} e^{-\im\psi}\odot f_{kl}
\eeqn
which are equivalent to \eqref{a1} and \eqref{a2} in view of the block phase profile in \eqref{a3}. 

{\bf (II).} 
First recall 
\beqn
\mu^{kl}&=&
\left[
\begin{matrix}
\mu^{kl}_{00}  & \mu^{kl}_{10} &  \mu^{kl}_{20}   \\
\mu^{kl}_{01}&\mu^{kl}_{11}&\mu^{kl}_{21}\\
\mu^{kl}_{02}  &\mu^{kl}_{12}& \mu^{kl}_{22} 
\end{matrix}
\right],  \quad
g=\bigvee_{k,l=0}^{q-1} g^{kl},\quad    g^{kl}= 
\left[
\begin{matrix}
g^{kl}_{00}  & g^{kl}_{10} &  g^{kl}_{20}   \\
g^{kl}_{01}&g^{kl}_{11}&g^{kl}_{21}\\
g^{kl}_{02}  &g^{kl}_{12}& g^{kl}_{22} 
\end{matrix}
\right]
\eeqn
in analogy to \eqref{mu2} and \eqref{fp2}. 

Since 
\beq
\label{trans}
\nu^{kl}(\bn)=\nu^{00}(\bn-\tau(k,l))),\quad \mu^{kl}(\bn)=\mu^{00}(\bn-\tau (k,l)),
\eeq
  \eqref{84} follows 
from   \eqref{over2} and \eqref{same}. 

By \eqref{over2} and restricting \eqref{same} to $\cM^{10}_{0j}, j=0,1,$
 we obtain
\[
g^{00}_{2j}=g^{10}_{0j}=e^{\im\theta_{10}} e^{-\im\psi_{0j}}\odot f^{10}_{0j}=e^{\im\theta_{10}} e^{-\im\psi_{0j}}\odot f^{00}_{2j},\quad j=0,1,
\]
which implies by \eqref{same}
\beqn
\nu^{00}_{2j}&=&e^{\im\theta_{00}}e^{-\im\theta_{10}}e^{\im\psi_{0j}}\odot \mu^{00}_{2j}, \quad j=0,1,\\
\nu^{00}_{j2}&=&e^{\im\theta_{00}}e^{-\im\theta_{01}}e^{\im\psi_{j0}}\odot \mu^{00}_{j2}, \quad j=0,1,
\eeqn
and consequently  \eqref{a7} and \eqref{a8}. 

By \eqref{trans} and restricting  \eqref{same} to $\cM^{kl}_{2j}, \cM^{kl}_{j2}, j=0,1,$ 
we have  \eqref{a10} and \eqref{a11}. 

For \eqref{a12} with $(k,l)=(0,0)$, the block $\cM^{10}_{02}=\cM^{00}_{22}$ is masked by $\mu^{10}_{02}$, a translate  of $\mu^{00}_{02}$. By restricting \eqref{same} to $\cM^{10}_{02}$, 
\beq
\label{r22'}
g^{00}_{22}=g^{10}_{02}=e^{\im (\theta_{10}+\theta_{01}-\theta_{00})}e^{-\im \psi_{00}}\odot f^{00}_{22}.
\eeq
which is equivalent to \eqref{a12} with $(k,l)=(0,0)$. Then  \eqref{same} and \eqref{r22'} imply 
\beq
\label{r22}
\nu^{00}_{22}=e^{\im (\theta_{00}-\theta_{10})} e^{\im (\theta_{00}-\theta_{01})} e^{\im\psi_{00}}\odot \mu^{00}_{22}
\eeq
which is equivalent to \eqref{a9}. 

For \eqref{a12} with general $k,l$, by restricting  \eqref{same} to $\cM^{kl}_{22}$  and 
 \eqref{r22} we have 
\[
g^{kl}_{22}=e^{\im\theta_{kl}}e^{\im (\theta_{10}-\theta_{00})} e^{\im (\theta_{01}-\theta_{00})} e^{-\im\psi_{00}}\odot f^{kl}_{00}
\]
and hence  \eqref{a12}. 
 \end{proof}
 
When $\tau=1$, the non-periodic, arithmetically progressing ambiguity and the affine phase ambiguity become the same. In addition, for $\tau=1$ the raster grid pathology becomes a constant phase factor
which can be ignored \cite{Iwen}. 

\medskip

\begin{cor}\label{cor:raster}
If $\tau=1$ (i.e. $q=n, p=m$) and \eqref{same} holds, then 
the probe and the object can be uniquely and simultaneously determined. 
\end{cor}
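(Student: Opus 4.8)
The plan is to specialize Theorem~\ref{thm:recovery}(I) to $\tau=1$ and observe that the ambiguity it leaves open collapses precisely onto the scaling factor and the affine phase ambiguity \eqref{lp1}--\eqref{lp2}. Since $\tau=1\le m/2$ (for $m\ge 2$), we are in the under-shifting regime with $p=m$ and $q=n$, so every sub-block $\mu^{00}_{kl}$, $\nu^{00}_{kl}$, $f_{kl}$, $g_{kl}$ is a single pixel, i.e.\ a scalar. First I would check that hypothesis \eqref{under1} is automatic here: the probe is localized and nowhere vanishing on $\cM^{00}$, hence $\mu^{00}_{00}$ and $\nu^{00}_{00}$ are nonzero complex numbers and $\nu^{00}_{00}=e^{\im\psi}\mu^{00}_{00}$ holds for some (possibly complex) scalar $\psi\in\IC=\IC^{1\times1}$. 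For a $1\times1$ sub-block this is no constraint at all, which is exactly the reason the $\tau^2$ raster-grid degrees of freedom degenerate to a single scalar.

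Next I would apply \eqref{a1}--\eqref{a2}. Identifying the sub-block label $(k,l)$ with the pixel coordinate $\bn$ (the corner of $\cM^{00}$ is the origin and $f$ is partitioned into single pixels), and using Corollary~\ref{cor1} for $\br=(r_1,r_2)$ with $r_1,r_2\in\frac{2\pi}{n}\IZ$, these say
\begin{align*}
\nu^{00}(\bn)&=e^{\im\psi}\,e^{-\im\br\cdot\bn}\,\mu^{00}(\bn),\qquad \bn\in\IZ_m^2,\\
g(\bn)&=e^{\im\theta_{00}}\,e^{-\im\psi}\,e^{\im\br\cdot\bn}\,f(\bn),\qquad \bn\in\IZ_n^2.
\end{align*}
Writing $\psi=\psi_R+\im\psi_I$ and setting $c:=e^{\psi_I}>0$, $\bw:=\br$, $a:=-\psi_R$, $b:=\theta_{00}-\psi_R$, these become $\nu^{00}(\bn)=c^{-1}\mu^{00}(\bn)\exp(-\im a-\im\bw\cdot\bn)$ and $g(\bn)=c\,f(\bn)\exp(\im b+\im\bw\cdot\bn)$; a one-line check against \eqref{drift2} gives $b-a=\theta_{00}$ and reproduces the block-phase profile $\theta_{kl}=\theta_{00}+\br\cdot(k,l)$, so the construction is indeed admissible and consistent with \eqref{same}.

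Hence $(\nu^{00},g)$ is obtained from the true $(\mu^{00},f)$ by composing the scaling ambiguity ($g=cf$, $\nu^{00}=c^{-1}\mu^{00}$) with the affine phase ambiguity \eqref{lp1}--\eqref{lp2}; no further degrees of freedom, and in particular no periodic ones, survive when $\tau=1$. Since these are exactly the ambiguities inherent to any blind ptychography and carry no physical significance, the probe and object are uniquely determined modulo them, which is the claim. I do not anticipate a genuine obstacle here: the whole content sits in Theorem~\ref{thm:recovery}, and the only two points needing a word of care are the automatic validity of \eqref{under1} for $1\times1$ sub-blocks (nonvanishing probe) and the elementary bookkeeping that rewrites a scalar complex $\psi$ as a positive scaling $c$ times a real constant phase.
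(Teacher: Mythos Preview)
Your argument is correct and follows the same route as the paper's own proof: specialize Theorem~\ref{thm:recovery}(I) to $\tau=1$, note that $\psi$ degenerates to a single complex scalar, and read off from \eqref{a1}--\eqref{a2} that only the scaling factor and the affine phase ambiguity remain. Your version is in fact more careful than the paper's (you explicitly separate the modulus of $e^{\im\psi}$ into the scaling factor $c=e^{\psi_I}$, whereas the paper compresses this into the phrase ``up to a constant phase factor''), but the substance is identical.
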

\begin{proof}
For $\tau=1$, $\mu_{00}$ consists of just one pixel and $\psi$ is a number. Hence
$\mu^{00}=\nu^{00}$ up to a constant phase factor and \eqref{a1}-\eqref{a2} then imply that
the affine phase ambiguity is the only ambiguity modulo the constant phase factor. 

\end{proof}

\section{Slightly perturbed raster scan}\label{sec:mix}

\begin{figure}
\centering
\subfigure[Perturbed grid  \eqref{perturb}]{\includegraphics[width=6cm,height=5.7cm]{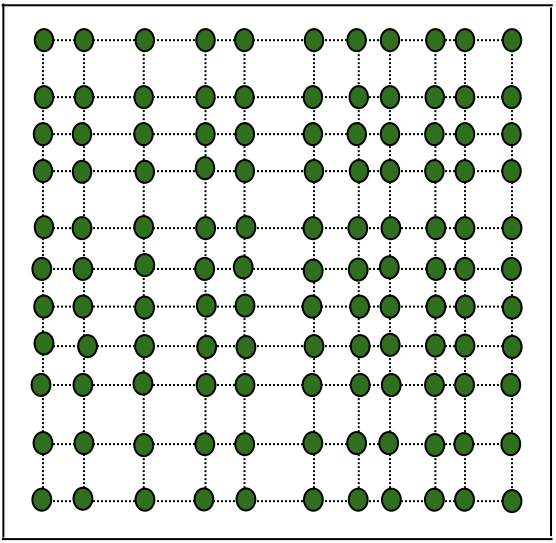}}\hspace{1cm}
\subfigure[Perturbed grid \eqref{perturb2}]{\includegraphics[width=6cm,height=5.7cm]{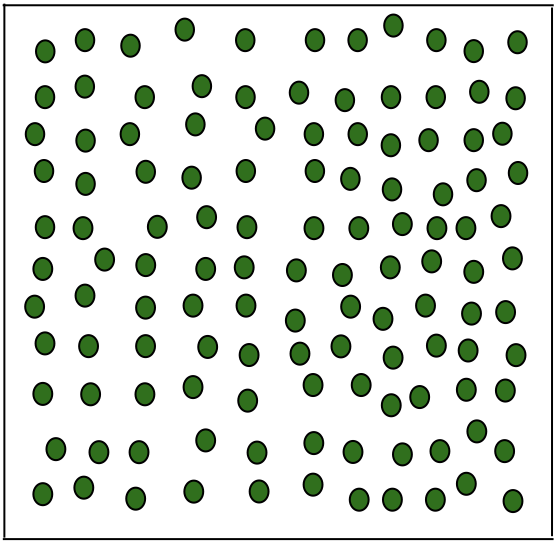}}
\label{fig:irregular}
\end{figure}

In this section, we demonstrate a simple way for removing
all the raster scan ambiguities except for the scaling factor and
the affine phase ambiguity.

For the rest of the paper, we assume that $f$ does not vanish in $\IZ_n^2$. 

We consider the perturbed raster scan (Fig. \ref{fig:irregular}(a))
\beq
\label{perturb}
\bt_{kl}=\tau (k,l)+  (\delta^1_{k},\delta^2_{l}),\quad k,l=0,\dots, q-1
\eeq
where $ \delta^1_{k}, \delta^2_{l}$ are small integers relative to  $\tau$ and $m-\tau$ (see Theorem \ref{thm:mix} for details). 
More general than \eqref{perturb} is the perturbed grid pattern (Fig. \ref{fig:irregular}(b)):
\beq
\label{perturb2}
\bt_{kl}=\tau (k,l)+  (\delta^1_{kl},\delta^2_{kl}),\quad k,l=0,\dots, q-1, 
\eeq
which is harder to analyze and implement in experimental practice (we will only present numerical simulation for it). 
Without loss of generality we set $\delta^1_0=\delta_0^2=0$ and hence $\bt_{00}=(0,0)$. 

Let us express the probe and object errors in terms of  
\beq
\label{100}
\nu^{00}(\bn)/\mu^{00}(\bn)&:=&\alpha(\bn) \exp{(\im \phi(\bn))},\quad\bn\in\cM^{00}\\
h(\bn)&:=& \ln g(\bn)-\ln f(\bn),\quad \bn\in \IZ_n^2,
\eeq
where we assume $\alpha(\bn)\neq 0$ for all $\bn\in \cM^{00}$,
 and rewrite \eqref{7.1} as  
\beq
\label{200.1}
h(\bn+  \bt) &=&\im \theta_\bt-\ln\alpha(\bn)-\im \phi(\bn) \mod\im 2\pi,
\eeq
for $ \bn \in \cM^{00}$.

By  \eqref{200.1} with $\bt=(0,0)$,
\beq
\label{200.4'}
h(\bn)=\im \theta_{00}-\ln\alpha(\bn)-\im \phi(\bn), \quad \forall \bn \in \cM^{00}
\eeq
and hence for all $ \bt\in \cT$ and $\bn\in \cM^{00}$
\beq
\label{200.4}
h(\bn+  \bt)-h(\bn)=\im\theta_\bt-\im\theta_{00}\mod\im 2\pi. 
\eeq

We wish to generalize such a relationship to the case where $\bt$ in \eqref{200.44} is replaced
by $\be_1=(1,0)$ and $\be_2=(0,1)$.

%First consider a simple irregular scan. 

%\begin{ex}
%\label{sec:ex} 

\subsection{A simple perturbation}
Let us first study the simple example of the two-shift perturbation to the  raster-scan with
 $\delta^1_{2}=\delta^2_{2}=-1$ but all other $\delta^j_k=0$, i.e. $\bt_{kl}=\tau(k,l)$ for $(k,l)\neq (2,0), (0,2)$. Then 
\beq
 &&h(\bn+  2\bt_{10}-\bt_{20})=h(\bn+  (1,0))\label{17.1}\\
 &&h(\bn+  2\bt_{01}-\bt_{02})=h(\bn+  (0,1)). \label{18.1}
\eeq
There are several routes of reduction from $(1,0)$ to $(0,0)$ via the shifts in $\cT$. For example, 
we can proceed from $(1,0)=2\bt_{10}-\bt_{20}$ to $(0,0)$
along the path 
\beq
\label{reduce2}
(2\bt_{10}-\bt_{20})\longrightarrow (\bt_{10}-\bt_{20}) \longrightarrow \bt_{10}\longrightarrow (0,0)
\eeq
by repeatedly applying \eqref{200.4}
where the direction of the second step is to be reversed since $-\bt_{20}\not\in\cT$ ($\cT$ is no longer a group even under the periodic boundary condition). The direction is important for keeping  track of the domain
of validity of \eqref{200.4} along the path.
Hence for all 
\beq
\label{43}\bn\in (\cM^{00}+\bt_{20}- \bt_{10})\cap \cM^{00} 
\eeq
we have \beqn
h(\bn+  2\bt_{10}-\bt_{20})&=&h(\bn+  \bt_{10}-\bt_{20})+  \im\theta_{10}-\im\theta_{00}
\\
&=&h(\bn+\bt_{10})+\im\theta_{10}-\im\theta_{20}\\
&=&h(\bn)+2\im\theta_{10}-\im\theta_{20}-\im\theta_{00}%\quad\forall \bn\in (\cM^{00}-(a,0)+  \bt_{10})\cap (\cM^{00}-(a,0)+  2\bt_{10})
\eeqn
\commentout{
\beq
\label{reduce2}
(2\bt_{10}-\bt_{20})\longrightarrow (\bt_{10}-\bt_{20}) \longrightarrow -\bt_{20}\longrightarrow (0,0)
\eeq
by repeatedly applying \eqref{200.4}
where the direction of the second step is to be reversed since $-\bt_{20}\not\in\cT$ ($\cT$ is no longer a group even under the periodic boundary condition). 
We  must also keep track of the domain
of validity for \eqref{200.4} along the path.
Hence for all 
\beqn%\label{a2'}
\bn\in (\cM^{00}+\bt_{20}- \bt_{10})\cap (\cM^{00}+ \bt_{20}) % \\
%&=& (\cM^{20}-\bt_{10})\cap \cM^{20}
\eeqn
we have \beqn
h(\bn+  2\bt_{10}-\bt_{20})&=&h(\bn+  \bt_{10}-\bt_{20})+  \im\theta_{10}-\im\theta_{00}
%\quad\forall \bn\in (\cM^{00}-(a,0)+  \bt_{10})
\\
&=&h(\bn-\bt_{20})+2\im\theta_{10}-2\im\theta_{00}%\quad\forall \bn\in (\cM^{00}-(a,0)+  \bt_{10})\cap (\cM^{00}-(a,0)+  2\bt_{10})
\\
&=&h(\bn)+2\im\theta_{10}-\im\theta_{20}-\im\theta_{00}%\quad\forall \bn\in (\cM^{00}-(a,0)+  \bt_{10})\cap (\cM^{00}-(a,0)+  2\bt_{10})
\eeqn
}
 and hence
\beq
\label{200.10} \label{57'}
h(\bn+  (1,0))=h(\bn)+\im \Delta^1,\quad \Delta^1:= 2\theta_{10}-\theta_{20}-\theta_{00}
\eeq
modulo $\im2\pi$. 
\commentout{ where 
\beqn
D&:=& \lt[\cM^{00}\cap (\cM^{00}-(1,0))\cap  (\cM^{00}-\bt_{10})\rt]\\
&&\bigcup \lt[(\cM^{00}-(1,0)+  \bt_{10})\cap (\cM^{00}+  \bt_{20})\rt] \\
&=& \lt[\cM^{00}\cap(\cM^{00}-(1,0))\cap  (\cM^{00}-(\tau,0))\rt]\\
&&\bigcup  \lt[(\cM^{00}+  (\tau-1,0))\cap (\cM^{00}+  (2\tau-1,0))\rt].\eeqn
Note that we have used the identity $-(1,0)+  2\bt_{10}=\bt_{20}$. 
}

Let us consider another alternative route for reduction:
 \beq\label{reduce1}
(2\bt_{10}-\bt_{20}) \longrightarrow 2\bt_{10}\longrightarrow \bt_{10} \longrightarrow (0,0) 
\eeq
where the proper direction for the first step in applying \eqref{200.4} is reversed. Keeping  track of the domain
of validity along the path, we have 
\beqn
h(\bn+  2\bt_{10}-\bt_{20})&=&h(\bn+  2\bt_{10})-\im\theta_{20}+  \im\theta_{00}\\%\quad\forall \bn\in (\cM^{00}-(a,0)) \\
&=&h(\bn+  \bt_{10})+\im\theta_{10}-\im\theta_{20}\\%\quad \forall \bn\in (\cM^{00}-(a,0))\cap  (\cM^{00}-\bt_{10}) \\
&=&h(\bn)+\im \Delta^1%2\im\theta_{10}-\im\theta_{20}-\im\theta_{00}. %\quad \forall \bn\in (\cM^{00}-(a,0))\cap  (\cM^{00}-\bt_{10})\cap \cM^{00} 
\eeqn
for all  
\beq
\label{45}
 \bn\in (\cM^{00}-2\bt_{10}+\bt_{20})\cap  (\cM^{00}-\bt_{10})\cap \cM^{00}.
 \eeq
 In summary, \eqref{200.10} holds for all $\bn$ in the union of \eqref{43} and \eqref{45}, i.e.
 \beqn
D^1%&:= &\lt[(\cM^{00}-2\bt_{10}+\bt_{20})\cap  (\cM^{00}-\bt_{10})\cap \cM^{00}\rt]\\
%&&\bigcup  \lt[(\cM^{00}+\bt_{20}- \bt_{10})\cap \cM^{00} \rt]\\% \lt[(\cM^{00}+\bt_{20}- \bt_{10})\cap (\cM^{00}+ \bt_{20})\rt]\nn
&=&(\lb 0, m-\tau-1\rb \cup\lb \tau-1,m-1\rb)\times \lb 0,m-1\rb.
 \eeqn

Clearly including other routes for reducing $2\bt_{10}-\bt_{20}$ to
$(1,0)$ in $D^1$ can enlarge the domain of validity for \eqref{200.10}.
For simplicity of argument, we omit them here.

By repeatedly applying \eqref{200.4} we have the following result.
\begin{prop}\label{prop:key}
The relation  \eqref{200.10}
\commentout{
 \beq
 \label{57}
h(\bn+(1,0))
&=&h(\bn)+\im \Delta^1\mod \im2\pi
\eeq
}
holds true in the set
\beq
\label{59}
\bigcup_{\bt\in \cT}\lt[\bt+D^1\cap \cM^{00}\cap (\cM^{00}-\be_1)\rt] 
\eeq
which contains $\IZ_n^2$ if
\beq
 \label{A1}
 \tau\le (m-2)\wedge [(m+1)/2]. 
 \eeq

\end{prop}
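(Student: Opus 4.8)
The proposition has two halves: that \eqref{200.10} propagates from $D^1$ onto the displayed union of $\cT$-translates, and that this union contains $\IZ_n^2$ under \eqref{A1}. The plan is to prove them in that order.

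For the propagation, I would fix $\bt\in\cT$ and a point $\bn$ in the base set $D^1\cap\cM^{00}\cap(\cM^{00}-\be_1)$; the last two factors force both $\bn$ and $\bn+\be_1$ into $\cM^{00}$, so \eqref{200.4} may be invoked at each, giving $h(\bn+\bt)\equiv h(\bn)+\im\theta_\bt-\im\theta_{00}$ and $h(\bn+\be_1+\bt)\equiv h(\bn+\be_1)+\im\theta_\bt-\im\theta_{00}$, both modulo $\im2\pi$. Subtracting these and inserting \eqref{200.10} at $\bn$ (legitimate since $\bn\in D^1$) yields $h((\bn+\bt)+\be_1)\equiv h(\bn+\bt)+\im\Delta^1$ modulo $\im2\pi$, which is \eqref{200.10} at the point $\bn+\bt$. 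Letting $\bt$ and $\bn$ vary gives \eqref{200.10} on $\bigcup_{\bt\in\cT}\lt[\bt+D^1\cap\cM^{00}\cap(\cM^{00}-\be_1)\rt]$, the set displayed in the statement.

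For the covering, I would first simplify the base set: from $\cM^{00}=\lb0,m-1\rb^2$ one gets $\cM^{00}\cap(\cM^{00}-\be_1)=\lb0,m-2\rb\times\lb0,m-1\rb$, and intersecting with $D^1=(\lb0,m-\tau-1\rb\cup\lb\tau-1,m-1\rb)\times\lb0,m-1\rb$ leaves the product $A\times\lb0,m-1\rb$, where $A=\lb0,m-\tau-1\rb\cup\lb\tau-1,m-2\rb$. Were $\cT$ the full lattice $\tau\,\IZ^2\bmod n$, the translated family would factor coordinatewise, so it would suffice to verify on the cyclic group $\IZ_n$ that $\bigcup_k(\tau k+A)=\IZ_n$ and $\bigcup_l(\tau l+\lb0,m-1\rb)=\IZ_n$; a union of $\tau$-translates of $B\subseteq\IZ_n$ equals $\IZ_n$ precisely when $B$ meets every residue class mod $\tau$. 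This is automatic for $\lb0,m-1\rb$ since $m>\tau$, and for $A$ it holds because the run $\lb0,m-\tau-1\rb$ supplies residues $0,\dots,m-\tau-1$ while $\lb\tau-1,m-2\rb$ supplies $\tau-1,0,1,\dots,m-\tau-2$, so jointly they exhaust $\IZ_\tau$ exactly when $2\tau\le m+1$, i.e.\ $\tau\le[(m+1)/2]$ (in the under-shifting range $m\ge2\tau$ the first run alone suffices). Since $\cT$ is in fact this lattice with $\tau(2,0),\tau(0,2)$ replaced by the perturbed shifts $\bt_{20}=\tau(2,0)-\be_1$ and $\bt_{02}=\tau(0,2)-\be_2$, the remaining task is to check that $\bt_{20}$ and $\bt_{02}$, together with the surviving lattice shifts in the adjacent rows and columns, cover every point for which one of the two dropped shifts was solely responsible; this is where the second half $\tau\le m-2$ of \eqref{A1} enters, since $\tau\le m-2$ is exactly what makes $m-\tau-2\ge0$, so that the two runs constituting $A$ (and their translates under $\bt_{20},\bt_{02}$) are long enough to close the gaps.

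The routine parts are the propagation step and the set computations. The genuine obstacle is the last, purely combinatorial bookkeeping: deciding exactly which pairs $(\tau,m)$ make $A$ hit every residue mod $\tau$, and then verifying that the two perturbed shifts repair the coverage that removing $\tau(2,0)$ and $\tau(0,2)$ would otherwise destroy. It is this wrap-around accounting at the block boundaries that forces the precise threshold \eqref{A1}.
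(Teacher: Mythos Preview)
Your propagation argument is correct and is exactly the paper's: apply \eqref{200.4} at $\bn$ and at $\bn+\be_1$, subtract, and insert \eqref{200.10} at $\bn$.

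For the covering you take a longer route than the paper. The simplification you miss is that under $2\tau\le m+1$ (equivalently $m-\tau-1\ge\tau-2$) the two runs in $A$ \emph{merge into a single interval}:
\[
A=\lb0,m-\tau-1\rb\cup\lb\tau-1,m-2\rb=\lb0,m-2\rb.
\]
With the base set equal to the full rectangle $\lb0,m-2\rb\times\lb0,m-1\rb$, no residue-class bookkeeping is needed at all; one simply checks that consecutive $\tau$-translates of $\lb0,m-2\rb$ leave no gap, which is the condition $\tau\le m-2$ (the second coordinate is automatic since $m>\tau$). This disposes of both halves of \eqref{A1} in two lines and eliminates the ``wrap-around accounting at the block boundaries'' you flag as the genuine obstacle.

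Your residue approach is not wrong, but it is more work and leaves the perturbation handling as an unfinished ``remaining task''. Your instinct that the perturbed shifts are what force $\tau\le m-2$ is actually on target: once $A=\lb0,m-2\rb$, the only enlarged step in the perturbed $\cT$ is from $\bt_{20}=(2\tau-1,0)$ to $\bt_{30}=(3\tau,0)$, of size $\tau+1$, and bridging it with an interval of $m-1$ points requires exactly $\tau+1\le m-1$, i.e.\ $\tau\le m-2$. (The paper's own proof writes the union over $\tau(k,l)$ rather than over the perturbed $\bt_{kl}$, so it does not make this point explicit either.) Had you first collapsed $A$ to an interval, this last check would have been a one-line computation rather than the combinatorial bookkeeping you anticipate.
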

\begin{proof}
For  $\bn\in D^1\cap \cM^{00}\cap (\cM^{00}-\be_1)$, we have 
\beq
h(\bn+\bt)&=&h(\bn+\be_1)-\im \Delta^1
+\im\theta_\bt-\im\theta_{00},\label{58'}
\eeq
by \eqref{57'} and \eqref{200.4}. 

Hence,  by \eqref{200.4} and \eqref{58'}, 
\beqn
h(\bn+\be_1+\bt)&=&h(\bn+\be_1)+\im\theta_\bt-\im\theta_{00}\\
&=& h(\bn+\bt)+\im \Delta^1.
\eeqn
In other words, \eqref{57'} has been extended to $\bt+D^1\cap \cM^{00}\cap (\cM^{00}-\be_1)$. 
Taking the union over all shifts, we obtain \eqref{59}.

For the second part of the proposition, let us write the set \eqref{59} explicitly as
\[
\bigcup_{k,l=0}^{q-1}\lt\{\tau(k,l)+ \lt[(\lb 0, m-\tau-1\rb \cup\lb \tau-1,m-1\rb)\cap \lb 0, m-2\rb\rt]\times \lb 0,m-1\rb\rt\}.
\]
Note that 
\beqn
 \lt[(\lb 0, m-\tau-1\rb \cup\lb \tau-1,m-1\rb)\cap \lb 0, m-2\rb\rt] &=&\lb 0, m-\tau-1\rb\cup \lb \tau-1, m-2\rb\\
 &=& \lb 0, m-2\rb
 \eeqn
under  $m-\tau-1\ge \tau-2$ or,  equivalently,  \eqref{A1}. To complete the argument, observe that 
the adjacent rectangles among
\[
(\tau(k, l)+\lb 0, m-2\rb )\times \lb 0, m-1\rb,\quad k,l=0,\dots, q-1,
\]
have zero gap if $\tau\le m-2$.

\end{proof}

\commentout{%11/03/2018
 We now extend  \eqref{200.10} from $D^1$  to all $\bn\in \IZ_n^2$ by applying \eqref{200.4} for all $\bt\in\cT$. In other words,  we wish to establish
\beq
\label{cover}
\IZ_n^2= \bigcup_{\bt\in \cT} (D^1+\bt)
\eeq
where the shift is under the periodic boundary condition on $\IZ_n^2$. 
\commentout{
Let $\lb k, l\rb$ denote the integers between, and including $k\le l\in \IZ$.  First observe that 
\[
D= \lt(\lb 0, m-\tau-1\rb\cup \lb 2\tau-1,m+\tau-2\rb\rt)\times \lb 0, m-1\rb 
\]
and hence $D\neq\emptyset $ if $\tau<m$.
}

To this end, it suffices to show that, for any $R$ congruent to $\cM^{00}=\IZ_m^2,$
\[
R\subseteq  \bigcup_{\bt\in \cT} (D^1+\bt)
\]
since $\IZ_n^2$ can then be covered by $\{R+\bt: \bt\in \cT\}$.

To this end,  we note that  the simple condition, $\tau\le (m+1)/2,$
implies $m-\tau-1\ge\tau-1-1$ and hence $D^1=\lb 0, m-1\rb^2=\IZ_m^2$.  The condition $\tau\le (m+1)/2$ means that the overlap ratio between the adjacent probes is at least $50\%$. 
}

By the same argument under \eqref{A1},  it follows  from \eqref{18.1} that for all $\bn\in \IZ_n^2 $
\beq
\label{prop2}
h(\bn+  (0,1))=h(\bn)+\im \Delta^2 \mod \im 2\pi,\quad \Delta^2:=2\im\theta_{01}-\im\theta_{02}-\im\theta_{00}.
\eeq
In conclusion, 
\beq\label{101}
h(\bn)=h(0)+\im \bn\cdot \br\mod \im 2\pi, \quad \forall\bn\in \IZ_n^2,
\eeq
where $\br=(\Delta^1,\Delta^2)$.

\commentout{%09/27/2018
Clearly,  \eqref{200.1} and \eqref{101} imply that  for all $\bn\in \cM^{00}$ 
\beq
\label{200.11}
\alpha(\bn) &=&\alpha(0), \\
\phi(\bn)&=&\phi(0)-\bn\cdot \br\mod 2\pi.\label{phase-error}
\eeq

Having shown that the probe phase error $\phi$ has an affine profile \eqref{phase-error}, let us now turn to the
block phases $\theta_k$. 

By \eqref{200.1}, 
\beq
\label{200.113}
h(\bn+  \bt_{kl})
&=&\im \theta_{kl}-\ln\alpha(0)-\im \phi(\bn)  \mod\im 2\pi,\quad \bn\in \cM^{00},
\eeq
for $k, l=0,\cdots, q-1$.

Substituting $\bn=(0,0)$ into \eqref{200.113} we have
\[
h(\bt_{kl})
=\im \theta_{kl}-\ln\alpha(0)-\im \phi(0)  \mod \im 2\pi.
\]
On the other hand, with the replacement $\bt_{kl}\to \bt_{00}$ and $\bn\to\bt_{kl}$ in \eqref{200.113}, we have
\[
h(\bt_{kl})=\im\theta_{00}-\ln\alpha(0)-\im \phi(\bt_{kl})  \mod\im 2\pi
\]
and hence 
\beq
\label{200.119}
\theta_{kl}-\theta_{00}=\phi(0)-\phi(\bt_{kl}),
\eeq
for $k,l=0,\cdots, q-1$.

It follows from \eqref{200.119} and  \eqref{phase-error} that 
\beq
\label{200.117}
{\theta_{kl}-\theta_{00}}
+  \bt_{kl}\cdot\br=0\mod 2\pi,
\eeq
for $k,l=0,\cdots,q-1.$

To summarize, we have shown that the scaling factor in \eqref{200.11} and the affine phase ambiguity, in \eqref{101} and \eqref{phase-error}, are
the only ambiguities in the slightly perturbed raster scan. 
}

\subsection{General perturbation}

Next we consider more general perturbations $\{\delta^i_k\}$ to the raster scan
and derive \eqref{101}.

Let us rewrite \eqref{200.4} in a different form:
Subtracting the respective \eqref{200.4} for $\bt$ and $\bt'$, we obtain
the equivalent form
\beq
\label{200.44'}
h(\bn+  \bt)-h(\bn+\bt')=\im\theta_{\bt}-\im\theta_{\bt'}\mod\im 2\pi,
\eeq
for any $\bn\in \cM^{00}$ and $\bt, \bt'\in \cT$, 
which  can also be written as
\beq\label{200.44}
h(\bn+  \bt-\bt')=h(\bn)+\im(\theta_{\bt}-\theta_{\bt'})\mod\im 2\pi,
\eeq
for $\bn\in \cM^{\bt'}$ by shifting the argument of $h$.

Consider  the triplets of shifts
\[
(\bt_{kl},\bt_{k+1,l},\bt_{k+2,l}),\quad (\bt_{kl},\bt_{k,l+1},\bt_{k,l+2})
\]
for which we have
\beq
\nn{2(\bt_{k+1,l}-\bt_{kl})-(\bt_{k+2,l}-\bt_{kl})}
&=&(2\delta_{k+1}^1-\delta^1_{k}-\delta^1_{k+2},0):=(a^1_{k},0),\label{17.1'}\\
\nn{2(\bt_{k,l+1}-\bt_{kl})-(\bt_{k,l+2}-\bt_{kl})}
&=& (0, 2\delta_{l+1}^2-\delta^2_{l}-\delta^2_{l+2}):=(0,a^2_{l}).\label{18.1'}
\eeq

Analogous to \eqref{reduce1} and \eqref{reduce2} the paths of reduction 
\beqn
{(2\bt_{k+1,l}-\bt_{kl}-\bt_{k+2,l})\longrightarrow 2(\bt_{k+1,l}-\bt_{kl})}
\longrightarrow (\bt_{k+1,l}-\bt_{kl}) \longrightarrow (0,0)
\eeqn
and
\beqn
{(2\bt_{k+1,l}-\bt_{kl}-\bt_{k+2,l})\longrightarrow (\bt_{k+1,l}-\bt_{k+2,l})}\longrightarrow (\bt_{k+1,l}-\bt_{kl})\longrightarrow (0,0)
\eeqn
lead to  \beq
\label{200.100}
{h(\bn+(a^1_{k},0))}
&=& h(\bn)+2\im\theta_{k+1,l}-\im\theta_{k+2,l}-\im\theta_{kl}\mod \im 2\pi
\eeq
for all $\bn\in D^1_{kl}$ where 
\beqn
D^1_{kl}&:=&\lt\{\cM^{kl}\cap [\cM^{kl}-2\bt_{k+1,l}+\bt_{k+2,l}+\bt_{kl}]\rt.\lt.\cap [\cM^{kl}-\bt_{k+1,l}+\bt_{kl}]\rt\}\\
&&\bigcup \lt\{\cM^{kl}\cap [\cM^{kl}+\bt_{k+2,l}-\bt_{k+1,l}]\rt\}\\
&=&\lt\{\cM^{kl}\cap [\cM^{kl}-(a^1_{k},0)]\rt.\lt. \cap  [\cM^{kl}-(\tau+\delta^1_{k+1}-\delta^1_k,0)]\rt\}\nn\\
\nn&&\bigcup \lt\{\cM^{kl}\cap [\cM^{kl}+(\tau+\delta^1_{k+2}-\delta^1_{k+1},0)]\rt\}
%\lt\{[\cM^{kl}-(a^1_{k},0)+(\tau+\delta^1_{k+1}-\delta^1_k,0)]\rt.\nn\\
%&&\nn\lt. \cap [\cM^{kl}-(a^1_{k},0)+2(\tau+\delta^1_{k+1}-\delta^1_{k},0)]\rt\}\neq\emptyset.]
\eeqn

Likewise, repeatedly applying \eqref{200.44} along the paths,
\beqn
{(2\bt_{k,l+1}-\bt_{kl}-\bt_{k,l+2})\longrightarrow 2(\bt_{k,l+1}-\bt_{kl})}\longrightarrow (\bt_{k,l+1}-\bt_{kl}) \longrightarrow (0,0)
\eeqn
and
\beqn
{(2\bt_{k,l+1}-\bt_{kl}-\bt_{k+2,l})\longrightarrow (\bt_{k+1,l}-\bt_{k+2,l})}\longrightarrow (\bt_{k+1,l}-\bt_{kl})\longrightarrow (0,0)
\eeqn
we get
\beq
\label{200.200}{h(\bn+(0,a^2_{l}))}
&=& h(\bn)+2\im\theta_{k,l+1}-\im\theta_{k,l+2}-\im\theta_{kl} \mod \im 2\pi
\eeq
for $\bn\in D^2_{kl}$ where 
\beqn
D^2_{kl}&:=&\lt\{\cM^{kl}\cap [\cM^{kl}-2\bt_{k,l+1}+\bt_{k,l+2}+\bt_{kl}]\rt.\lt.\cap [\cM^{kl}-\bt_{k,l+1}+\bt_{kl}]\rt\}\\
&&\bigcup \lt\{\cM^{kl}\cap [\cM^{kl}+\bt_{k,l+2}-\bt_{k,l+1}]\rt\}\\
&=&\lt\{\cM^{kl}\cap [\cM^{kl}-(0, a^2_{l})]\rt.\lt. \cap  [\cM^{kl}-(0, \tau+\delta^2_{l+1}-\delta^2_l)]\rt\}\nn\\
\nn&&\bigcup \lt\{\cM^{kl}\cap [\cM^{kl}+(0, \tau+\delta^2_{l+2}-\delta^2_{l+1})]\rt\}.
%\lt\{[\cM^{kl}-(a^1_{k},0)+(\tau+\delta^1_{k+1}-\delta^1_k,0)]\rt.\nn\\
%&&\nn\lt. \cap [\cM^{kl}-(a^1_{k},0)+2(\tau+\delta^1_{k+1}-\delta^1_{k},0)]\rt\}\neq\emptyset.]
\eeqn
%Both \eqref{small1} and \eqref{small2} are conditions about relative smallness of the perturbations to the average overlap, $m-\tau$, between the adjacent probes. 

\commentout{
\beqn
2\bt_{k,l+1}-\bt_{kl}-\bt_{k,l+2}\longrightarrow 2(\bt_{k,l+1}-\bt_{kl})\longrightarrow (0,0)
\eeqn
and
\beqn
2\bt_{k,l+1}-\bt_{kl}-\bt_{k,l+2}\longrightarrow -(\bt_{k,l+2}-\bt_{kl})\longrightarrow (0,0),
\eeqn
we get
\beq
\label{200.200}\lefteqn{h(\bn+(0,a^2_{l}))}\\
&=& h(\bn)+2\im\theta_{k,l+1}-\im\theta_{k,l+2}-\im\theta_{kl} \mod \im 2\pi\nn
\eeq
 for all $\bn\in \IZ_n^2$ provided that 
\beqn
D^2_{kl}&:=&\lt\{\cM^{kl}\cap [\cM^{kl}-(0,a^2_{l})]\rt.\\
&&\nn \lt. \cap  [\cM^{kl}-(0, \tau+\delta^2_{l+1}-\delta^2_l)]\rt\}\\
\nn&&\bigcup\lt\{[\cM^{kl}-(0,a^2_{l})+(0,\tau+\delta^2_{l+1}-\delta^2_l)]\rt.\\
&&\nn \lt. \cap [\cM^{kl}-(0,a^2_l)+2(0,\tau+\delta^2_{l+1}-\delta^2_{l})]\rt\}\neq\emptyset.
\eeqn
}

\begin{lemma}\label{lem:key}
Let $k,l$ be fixed. The relations  \eqref{200.100} and \eqref{200.200} 
hold true in the sets
\beq
\label{400}
\bigcup_{\bt\in \cT}\lt[\bt+D_{kl}^1\cap \cM^{kl}\cap (\cM^{kl}-(a^1_k,0))\rt] 
\eeq
and 
\beq\label{401}
\bigcup_{\bt\in \cT}\lt[\bt+D_{kl}^2\cap \cM^{kl}\cap (\cM^{kl}-(0,a^2_l))\rt], 
\eeq
respectively. Both sets
 contain $\IZ_n^2$ if the following conditions hold:
 \beq
\label{small1}
&&\max_{i=1,2}\{|a^i_k|+\delta_{k+1}^i-\delta_k^i\}\le\tau\\
\label{cover2}
&&2\tau \le m-\max_{i=1,2}\{\delta^i_{k+2}-\delta^i_k\}\\
\label{small2}
&&\max_{k'}\max_{i=1,2}\{|a^i_{k}| +\delta^i_{k'+1}-\delta^i_{k'}\}\le m-1-\tau. 
\eeq
\end{lemma}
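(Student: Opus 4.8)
The plan is to mimic the two-step argument already used in Proposition~\ref{prop:key}, but now carrying the fixed indices $k,l$ as parameters and being careful about how the perturbations $\delta^i_{\bullet}$ enlarge or shrink the relevant domains. The two relations \eqref{200.100} and \eqref{200.200} are symmetric in the two coordinate directions, so I would prove the statement for \eqref{200.100} (the set \eqref{400}) and then invoke the obvious symmetry $\be_1\leftrightarrow\be_2$, $\delta^1\leftrightarrow\delta^2$, $a^1\leftrightarrow a^2$ for \eqref{200.200}.

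First I would establish the \emph{propagation step}: for $\bn\in D^1_{kl}\cap\cM^{kl}\cap(\cM^{kl}-(a^1_k,0))$ and any $\bt\in\cT$, combine \eqref{200.44} (applied to shift $\bn$ by $\bt-\bt_{kl}$, which is legitimate since $\bn\in\cM^{kl}=\cM^{00}+\bt_{kl}$) with \eqref{200.100} itself to conclude that \eqref{200.100} also holds at $\bn+\bt-\bt_{kl}$. This is exactly the computation in the proof of Proposition~\ref{prop:key}: one writes $h(\bn+(a^1_k,0)+\bt-\bt_{kl})=h(\bn+(a^1_k,0))+\im\theta_\bt-\im\theta_{kl}$ and $h(\bn+\bt-\bt_{kl})=h(\bn)+\im\theta_\bt-\im\theta_{kl}$, and subtracts. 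Taking the union over $\bt\in\cT$ then yields that \eqref{200.100} holds on the set \eqref{400}; likewise \eqref{200.200} holds on \eqref{401}.

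Next comes the \emph{covering step}, which is where the three hypotheses \eqref{small1}--\eqref{cover2}--\eqref{small2} enter, and which I expect to be the main obstacle — not because it is deep but because one must unpack $D^1_{kl}$ explicitly as an intersection/union of translated copies of $\IZ_m^2$ and check, coordinate by coordinate, that the three inequalities make each translated interval a full interval and make adjacent translates overlap with no gap. Concretely: $\cM^{kl}\cap[\cM^{kl}-(a^1_k,0)]\cap[\cM^{kl}-(\tau+\delta^1_{k+1}-\delta^1_k,0)]$ is a product of intervals in the first coordinate whose length is controlled by $\max\{|a^1_k|,\,\tau+\delta^1_{k+1}-\delta^1_k\}$, so condition \eqref{small1} guarantees it is nonempty and of the expected size; the second piece $\cM^{kl}\cap[\cM^{kl}+(\tau+\delta^1_{k+2}-\delta^1_{k+1},0)]$ is handled by the $k+2$-part of \eqref{small1}; their union is the full first-coordinate interval $\lb 0,m-1\rb$ precisely when the two overlap, which is where \eqref{cover2} ($2\tau\le m-(\delta^1_{k+2}-\delta^1_k)$) comes in — this is the $50\%$-overlap condition in disguise. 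Finally, after intersecting with $\cM^{kl}\cap(\cM^{kl}-(a^1_k,0))$ one loses at most $|a^1_k|$ pixels at one end, and translating this nearly-full box by all $\bt=\tau(k',l')+(\delta^1_{k'},\delta^2_{l'})\in\cT$ must tile $\IZ_n^2$ with no gap; the worst-case gap between consecutive translates in the first coordinate is $\tau+\max_{k'}(\delta^1_{k'+1}-\delta^1_{k'})$ against a box of length $m-1-|a^1_k|$ (after the intersection), so condition \eqref{small2} is exactly what closes the gap. I would present this as: write \eqref{400} explicitly as $\bigcup_{k',l'}\{\tau(k',l')+(\delta^1_{k'},\delta^2_{l'})+B\}$ with $B$ the explicit box, show $B\supseteq \lb 0,c\rb\times\lb 0,m-1\rb$ for suitable $c$ under \eqref{small1}--\eqref{cover2}, and show the $\tau(k',l')+(\delta^1_{k'},\delta^2_{l'})$-translates of this sub-box cover $\IZ_n^2$ under \eqref{small2}, using the periodic boundary condition to handle wraparound. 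The symmetric statements for the second coordinate finish the proof.
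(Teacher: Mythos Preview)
Your proposal is correct and follows essentially the same two-step structure as the paper's proof: a propagation step via \eqref{200.44}/\eqref{200.4} extending \eqref{200.100} from $D^1_{kl}\cap\cM^{kl}\cap(\cM^{kl}-(a^1_k,0))$ to all its $\cT$-translates, followed by an explicit covering argument matching \eqref{small1}, \eqref{cover2}, \eqref{small2} to, respectively, nonemptiness of the first piece of $D^1_{kl}$, the two pieces merging into a full interval, and the translates tiling $\IZ_n^2$ without gaps. One small point you gloss over: the paper splits into the two cases $a^1_k\ge 0$ and $a^1_k<0$, since $\cM^{kl}\cap(\cM^{kl}-(a^1_k,0))$ is $\lb 0,m-1-|a^1_k|\rb\times\lb 0,m-1\rb$ or $\lb |a^1_k|,m-1\rb\times\lb 0,m-1\rb$ (shifted by $\bt_{kl}$) accordingly; your sketch ``$B\supseteq\lb 0,c\rb\times\lb 0,m-1\rb$'' tacitly assumes the first case, so when you write it out be sure to treat both signs.
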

\medskip
\begin{rmk}
Proposition \ref{prop:key} corresponds to $(k,l)=(0,0)$ with \eqref{small1}, \eqref{cover2} and \eqref{small2} reduced to
\[
1\le \tau,\quad 2\tau\le m+1,\quad \tau\le m-2, 
\]
respectively. 
\end{rmk}
\medskip
\begin{rmk}
Inequalities \eqref{small1} and \eqref{small2} are smallness conditions for the perturbations relative to the average step size and the overlap between
the adjacent probes. The most consequential condition \eqref{cover2} suggests an average overlap ratio of at least $50\%$, i.e. under-shifted raster scan. 
\end{rmk}
\begin{proof}
The argument follows the same pattern as that for Proposition \ref{prop:key}. 

For  $\bn\in D_{kl}^1\cap \cM^{00}\cap (\cM^{00}-(a^1_k,0))$, we have 
\beqn
h(\bn+\bt)&=&h(\bn+(a^1_k,0))-\im (2\theta_{k+1,l}-\theta_{k+2,l}-\theta_{kl})
+\im\theta_\bt-\im\theta_{00}
\eeqn
by \eqref{200.100} and \eqref{200.4}. 

Hence,  by \eqref{200.4} and \eqref{58'}, 
\beqn
h(\bn+(a^1_k,0)+\bt)&=&h(\bn+(a^1_k,0))+\im\theta_\bt-\im\theta_{00}\\
&=& h(\bn+\bt)+\im (2\theta_{k+1,l}-\theta_{k+2,l}-\theta_{kl}).
\eeqn
Taking the union over all shifts, we obtain the set in \eqref{400}. The case for \eqref{401} is similar. 

For the second part of the proposition, note that
\beq
\label{404}
\cM^{kl}\cap (\cM^{kl}-(a^1_k,0)) &=& \lb 0, m-1-|a^1_{k}|\rb \times \lb 0,m-1\rb\quad \mbox{if}\quad a^1_{k}\ge 0\\
\quad &\mbox{or}& \lb |a^1_{k}|, m-1\rb \times \lb 0,m-1\rb,\quad\mbox{if}\quad a^1_{kl}<0.\nn
\eeq
\commentout{
Hence 
\beqn
D^1_{kl}&\supseteq&(\lb |a^1_{kl}|, m-1-\tau-\delta^1_{k+1}+\delta^1_k\rb \cup\lb \tau+\delta^1_{k+2}-\delta^1_{k+1},m-1\rb)\\
&&\times \lb 0,m-1\rb + (\tau k+\delta^1_k,\tau l+\delta^2_l)
\eeqn
where we assume
}
In the former case  in \eqref{404}  the set \eqref{400} contains
\beq
\label{402}&&\bigcup_{\bt\in \cT}\lt[\bt+ \bt_{kl}+(\lb 0, m-1-\tau-\delta^1_{k+1}+\delta^1_k\rb \cup\lb \tau+\delta^1_{k+2}-\delta^1_{k+1},m-1\rb)\rt.\\
&&\quad\quad \lt.\times \lb 0,m-1\rb \cap \lt(\lb 0, m-1-|a^1_{k}|\rb \times \lb 0,m-1\rb\rt)\rt]\nn
\eeq
under the condition
\beq
\label{407}
 |a^1_k|+\delta_{k+1}^1-\delta_k^1\le \tau \le  m-1-|a_k^1|-\delta^1_{k+1}+\delta^1_k.
\eeq
The set in \eqref{402}  becomes
 \beq
 \label{403}
&&\bigcup_{\bt\in \cT}\lt[\bt+ \bt_{kl}+ \lt(\lb 0, m-1\rb \cap\lb 0, m-1-|a^1_{k}|\rb\rt) \times \lb 0,m-1\rb\rt]\\
&=&\bigcup_{\bt\in \cT}\lt[\bt+ \bt_{kl}+\lb 0, m-1-|a^1_{k}|\rb \times \lb 0,m-1\rb\rt]\nn
\eeq
 under the condition
  \beq
\label{57}
m-1-\tau-\delta^1_{k+1}+\delta^1_k &\ge & \tau+\delta^1_{k+2}-\delta^1_{k+1}-1.
%m-1-\tau-\delta^2_{l+1}+\delta^2_l &\ge & \tau+\delta^2_{l+2}-\delta^2_{l+1}-1\label{58}
\eeq
The set in \eqref{403} contains $\IZ_n^2$ if for each $ l'$ the adjacent sets among
\[
\tau(k'+k,l'+l)+ (\delta^1_{k'}+\delta^1_k,\delta^2_{l'}+\delta^2_l)+ \lb 0, m-1-|a^1_{k}|\rb \times \lb 0,m-1\rb,
\]
for $k'=0,\dots, q-1,$ have no gap between them, which is the case if
\beq
\label{405}
\tau+\delta^1_{k'+1}-\delta^1_{k'}\le m-1-|a^1_{k}|,\quad \forall k'.
\eeq
Note that \eqref{405} subsumes the second inequality in \eqref{407}.

Likewise for the latter case in \eqref{404} the set in \eqref{400} contains
\beq
\label{406}&&\bigcup_{\bt\in \cT}\lt[\bt+ \bt_{kl}+(\lb |a^1_k|, m-1-\tau-\delta^1_{k+1}+\delta^1_k\rb \cup\lb \tau+\delta^1_{k+2}-\delta^1_{k+1},m-1\rb)\rt.\\
&&\quad\quad \lt.\times \lb 0,m-1\rb \cap \lt(\lb |a^1_k|, m-1\rb \times \lb 0,m-1\rb\rt)\rt]\nn
\eeq
under the condition 
\beq\label{408}
|a_l^2| +\delta^2_{l+1}-\delta^2_l\le \tau\le m-1-|a_l^2| -\delta^2_{l+1}+\delta^2_l.
\eeq
The set in \eqref{406} in turn becomes
 \beq
\label{403'}\bigcup_{\bt\in \cT}\lt[\bt+ \bt_{kl}+\lb |a^1_k|, m-1\rb \times \lb 0,m-1\rb\rt]\nn
\eeq
 under the condition
  \beqn
m-1-\tau-\delta^1_{k+1}+\delta^1_k &\ge & \tau+\delta^1_{k+2}-\delta^1_{k+1}-1.
%m-1-\tau-\delta^2_{l+1}+\delta^2_l &\ge & \tau+\delta^2_{l+2}-\delta^2_{l+1}-1\label{58}
\eeqn
The set in \eqref{403'} contains $\IZ_n^2$ if for each $ l'$ the adjacent sets among
\[
\tau(k'+k,l'+l)+ (\delta^1_{k'}+\delta^1_k,\delta^2_{l'}+\delta^2_l)+ \lb |a^1_k|, m-1\rb \times \lb 0,m-1\rb,
\]
for $k'=0,\dots, q-1,$ have no gap between them, which is the case under the same condition \eqref{405}
which subsumes the second inequality in \eqref{408}. 

The case with \eqref{401} can be proved by the same argument as  above. 
\end{proof}

Since  $\cM^{kl}$ overlaps with $\cM^{k+1,l}$ and $\cM^{k,l+1}$ which in turn overlap with $\cM^{k+2,l}$ and $\cM^{k,l+2}$, respectively
(and so on), 
the quantities 
\beq
\label{D1}
\Delta^1_k:=2\theta_{k+1,l}-\theta_{k+2,l}-\theta_{kl}\\
\Delta^2_l:=2\theta_{k,l+1}-\theta_{k,l+2}-\theta_{kl}\label{D2}
\eeq
on the righthand side of  \eqref{200.100} and \eqref{200.200} depend only on one index
and we can write

%(This observation is used to simplify the notation but not the essence of the remainder argument). 

Suppose further that there exist $c^1_k,c^2_l\in \IZ$ such that
\beq
\sum_{k=0}^{q-1}c^1_ka^1_{k}=\sum_{l=0}^{q-1}c^2_l a^2_{l}=1,\label{prime}
\eeq
i.e.  $\{a^j_{i}\}$ are co-prime integers for each $j=1,2$.

Then by repeatedly using \eqref{200.100}-\eqref{200.200} we arrive at
\beqn
h(\bn+(1,0))&=&h\lt(\bn+(\sum_{k}c^1_k a^1_{k},0)\rt)\nn
= h(\bn)+\im r_1 \mod \im 2\pi\\
h(\bn+(0,1))&=&h\lt(\bn+(0,\sum_{l}c^2_l a^2_{l})\rt)
=h(\bn)+\im r_2 \mod \im 2\pi
\eeqn
where
\beq
\label{rr}
r_1=\sum_{k=0}^{q-1}c^1_k\Delta^1_k, \quad r_2=\sum_{l=0}^{q-1}c^2_l\Delta^2_l.
\eeq
Therefore, we obtain \eqref{101} with $\br=(r_1,r_2)$ given by \eqref{rr}. 
Following through the rest of argument we can prove the following result.
\medskip
\begin{theorem}  \label{thm:mix}Suppose $f$ does not vanish in $\IZ_n^2$. For the perturbed raster scan \eqref{perturb}, let $\{\delta^i_{j_k}\}$ be the subset of  perturbations  satisfying 
 \beq
\label{small1'}
\tau&\ge&\max_{i=1,2}\{|a^i_{j_k}|+\delta_{j_k+1}^i-\delta_{j_k}^i\}\\
\label{cover2'}
2\tau&\le& m-\max_{i=1,2}\{\delta^i_{j_k+2}-\delta^i_{j_k}\}\\
\label{small2'}
m-\tau&\ge &1+\max_{k'}\max_{i=1,2}\{|a^i_{j_k}| +\delta^i_{k'+1}-\delta^i_{k'}\} 
\eeq
where $a^i_j=2\delta^i_{j+1}-\delta^i_j-\delta^i_{j+2}.$
Suppose 
%\beq
%\label{small}
%\max_i |a^j_{i}|+|\tau+\delta^j_{i+1}-\delta^j_i|<m,
%\eeq
%where  for each $j=1,2,$  the  integers  
%\[ 
%a^j_i:=2\delta^j_{i+1}-\delta^j_{i}-\delta^j_{i+2}: i=0,\dots,q-1,
%\]
%are co-prime, i.e. 
\beq
\label{coprime}
\gcd_{j_k}\lt(|a^i_{j_k}|\rt)=1,\quad  i=1,2.
\eeq
 Let $\br=(r_1,r_2)\in \IR^2$ be given by \eqref{rr} and 
$\{c^j_i\}$ be any solution to \eqref{prime} such that $\{c^i_{j_k}\}$ are the only nonzero entries. 

Then both the object and probe errors have  a constant scaling factor and an affine phase profile:
\beq
\label{101'}
 g(\bn)/f(\bn)&=&\alpha^{-1}(0)\exp(\im \bn\cdot \br),\\
%h(\bn)&=&h(0)+\im \bn\cdot \br\mod \im 2\pi,\\
%\alpha(\bn) &=&\alpha(0), \label{200.11}\\
%\phi(\bn)&=&\phi(0)-\bn\cdot \br \mod 2\pi
\nu^0(\bn)/\mu^0(\bn)&=&\alpha(0)
\exp(\im\phi(0)- \im \bn\cdot \br).\label{phase-error}\label{200.11}
\eeq
%with $\alpha$ and $\phi$ given in \eqref{100}. 
Further the block phases have an affine profile:
\beq\label{under-determined}
{\theta_{kl}=\theta_{00}}
+\bt_{kl}\cdot\br \mod  2\pi,
\eeq
for $k,l=0,\cdots,q-1.$

\end{theorem}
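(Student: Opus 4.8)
The plan is to derive the three conclusions \eqref{101'}, \eqref{phase-error} and \eqref{under-determined} from the affine profile \eqref{101} of $h$, which carries the real content; everything after that is unwinding the definitions \eqref{100} and the pointwise identity \eqref{200.1}.

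\emph{Step 1: securing \eqref{101}.} I would first note that the hypotheses \eqref{small1'}, \eqref{cover2'}, \eqref{small2'} are precisely conditions \eqref{small1}, \eqref{cover2}, \eqref{small2} of Lemma \ref{lem:key} written for the indices $j_k$, so Lemma \ref{lem:key} applies and the shift relations \eqref{200.100}, \eqref{200.200} hold on all of $\IZ_n^2$ for the displacements $(a^1_{j_k},0)$ and $(0,a^2_{j_k})$, adding to $h$ the constant increments $\Delta^1_{j_k}$, $\Delta^2_{j_k}$ of \eqref{D1}, \eqref{D2}. Iterating these relations $c^i_{j_k}$ times in direction $i$ and invoking the coprimality \eqref{coprime}, i.e. a solution of \eqref{prime} supported on $\{j_k\}$, telescopes to $h(\bn+\be_i)=h(\bn)+\im r_i\pmod{\im 2\pi}$ with $\br=(r_1,r_2)$ as in \eqref{rr}; composing in the two coordinate directions gives \eqref{101}, namely $h(\bn)=h(0)+\im\,\bn\cdot\br\pmod{\im 2\pi}$ on $\IZ_n^2$. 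This is exactly the computation displayed just before the theorem, repackaged under the explicit hypotheses.

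\emph{Step 2: the probe and object errors.} Specializing \eqref{200.1} to $\bt=(0,0)$ recovers \eqref{200.4'}, $h(\bn)=\im\theta_{00}-\ln\alpha(\bn)-\im\phi(\bn)$ on $\cM^{00}$. Subtracting its value at $\bn=0$ and comparing with \eqref{101} leaves, for $\bn\in\cM^{00}$, the identity $-\ln\alpha(\bn)-\im\phi(\bn)=-\ln\alpha(0)-\im\phi(0)+\im\,\bn\cdot\br\pmod{\im 2\pi}$. Since $\alpha>0$ and $\phi$ is real, the real and imaginary parts decouple cleanly: the real part forces $\alpha(\bn)\equiv\alpha(0)$ and the imaginary part forces $\phi(\bn)\equiv\phi(0)-\bn\cdot\br\pmod{2\pi}$, which together are \eqref{phase-error} (equivalently \eqref{200.11}). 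Exponentiating \eqref{101} gives $g(\bn)/f(\bn)=e^{h(0)}e^{\im\,\bn\cdot\br}$, and \eqref{200.4'} at $\bn=0$ identifies $e^{h(0)}=\alpha(0)^{-1}e^{\im(\theta_{00}-\phi(0))}$; discarding the inherent, immaterial constant phase $e^{\im(\theta_{00}-\phi(0))}$ (equivalently, normalizing $\theta_{00}=\phi(0)$) yields \eqref{101'}.

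\emph{Step 3: the block phases, and the main obstacle.} For a general $\bt_{kl}\in\cT$, \eqref{200.1} reads $h(\bn+\bt_{kl})=\im\theta_{kl}-\ln\alpha(\bn)-\im\phi(\bn)\pmod{\im 2\pi}$ for $\bn\in\cM^{00}$; rewriting the left side via \eqref{101} as $h(\bn)+\im\,\bt_{kl}\cdot\br$ and recognizing $\im\theta_{00}-\ln\alpha(\bn)-\im\phi(\bn)=h(\bn)$ from \eqref{200.4'}, every $\bn$-dependent term cancels and one is left with $\theta_{kl}=\theta_{00}+\bt_{kl}\cdot\br\pmod{2\pi}$, which is \eqref{under-determined}. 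I expect no serious difficulty once \eqref{101} is available — Steps 2 and 3 are routine $\pmod{\im 2\pi}$ bookkeeping, and the genuine work is all in Lemma \ref{lem:key}, already proved. The two points that still need care are: (a) verifying that restricting the perturbation indices to the coprime subfamily $\{j_k\}$ does not spoil the applicability of Lemma \ref{lem:key}, so that \eqref{200.100}, \eqref{200.200} truly hold on all of $\IZ_n^2$ for those indices; and (b) correctly reading the leftover constant $e^{\im(\theta_{00}-\phi(0))}$ in $g/f$ as the inherent, ignorable constant-phase factor rather than a new ambiguity.
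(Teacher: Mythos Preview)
Your proposal is correct and mirrors the paper's own argument: the substantive work is delegated to Lemma~\ref{lem:key} plus the coprimality telescoping that yields \eqref{101}, after which \eqref{101'}, \eqref{phase-error} and \eqref{under-determined} are read off from \eqref{200.4'} and \eqref{200.1} exactly as you describe. The paper's proof is terser---it simply points to \eqref{200.1} and the preceding derivation of \eqref{101}---but your Steps~2--3 spell out the same real/imaginary decoupling and substitution, and your handling of the leftover constant phase $e^{\im(\theta_{00}-\phi(0))}$ is the correct interpretation of the inherent constant-phase ambiguity that the paper silently absorbs.
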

\medskip
\begin{rmk} 
It can be verified through a tedious calculation that \eqref{under-determined} (with   \eqref{D1}-\eqref{D2}, \eqref{prime} and \eqref{rr})  is an underdetermined linear
system for $\{\theta_{kl}\}$, which is consistent with the fact that the affine phase ambiguity \eqref{lp1}-\eqref{lp2} is inherent to
any blind ptychography.

\end{rmk}
\begin{proof}
\commentout{
To complete the proof of \eqref{101'}, it remains to show 
 that $D^1_{kl}$ and $D^2_{kl}$ are both non-empty. 
Indeed, under \eqref{small},  we can verify that
\beqn
%E_1&=&\cM^{kl}\cap [\cM^{kl}-(a^1_{k},0)]\\
%&&\cap  [\cM^{kl}-(\tau+\delta^1_{k+1}-\delta^1_k,0)] \\
\nn E_1&=&[\cM^{kl}-(a^1_{k},0)+(\tau+\delta^1_{k+1}-\delta^1_k,0)]\\
\nn &&\cap [\cM^{kl}-(a^1_{k},0)+2(\tau+\delta^1_{k+1}-\delta^1_{k},0)]\\
%\nn E_2&=&\cM^{kl}\cap [\cM^{kl}-(0,a^2_{l})]\\
%&& \cap  [\cM^{kl}-(0, \tau+\delta^2_{l+1}-\delta^2_l)\\
\nn E_2&=&[\cM^{kl}-(0,a^2_{l})+(0,\tau+\delta^2_{l+1}-\delta^2_l)]\\
\nn &&\cap [\cM^{kl}-(0,a^2_l)+2(0,\tau+\delta^2_{l+1}-\delta^2_{l})]
\eeqn
are nonempty sets
\commentout{%10/09/2018  
Note that for $j=1,2$, and $k=0,\dots,q-1$, 
\[
a_k^j-(\tau+\delta^j_{k+1}-\delta^j_k)=-\tau-\delta^j_{k+2}+\delta^j_{k+1} 
\]
}
since the differential shift between the two blocks in each $E_i, i=1,2$  is less than 
\commentout{%10/09/2018
\[
a^1_k \quad \mbox{or}\quad \tau+\delta^1_{k+1}-\delta^j_k \quad \mbox{or}\quad \tau+\delta^1_{k+2}-\delta^1_{k+1}
\]
 and hence less than $m$ by \eqref{small}. 
 }
Therefore $D^1_{kl}\neq \emptyset, D^2_{kl}\neq \emptyset$. 
}

It remains to verify \eqref{200.11} and \eqref{phase-error} which  follow immediately from  \eqref{200.1} and \eqref{101'}. 

The block phase relation \eqref{under-determined} follows upon
substituting $\bt=\bt_{kl}$ and \eqref{101'} into  \eqref{200.1}.  

To summarize, we have shown that the scaling factor in \eqref{200.11} and the affine phase ambiguity, in \eqref{101'} and \eqref{phase-error}, are
the only ambiguities for the slightly perturbed raster scan \eqref{perturb}.

\end{proof}

\section{Numerical experiments}\label{sec:num}
In this section we  demonstrate geometric convergence  for blind ptychography with
the perturbed raster scan \eqref{perturb}.

\begin{figure}[t]
\centering
\subfigure[$f$'s real part]{\includegraphics[width=4cm]{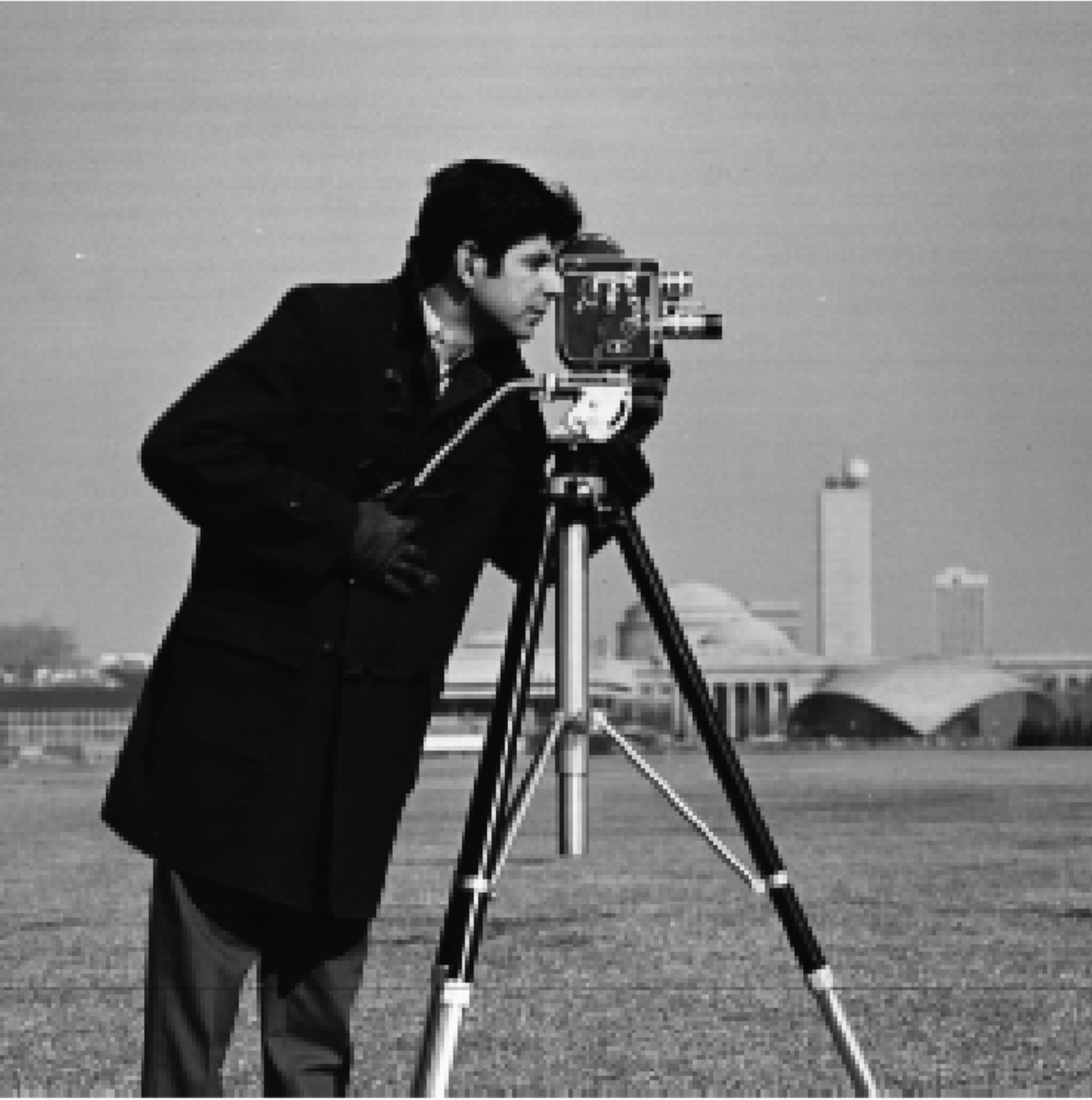}}\quad
\subfigure[$f$'s imaginary part]{\includegraphics[width=4cm]{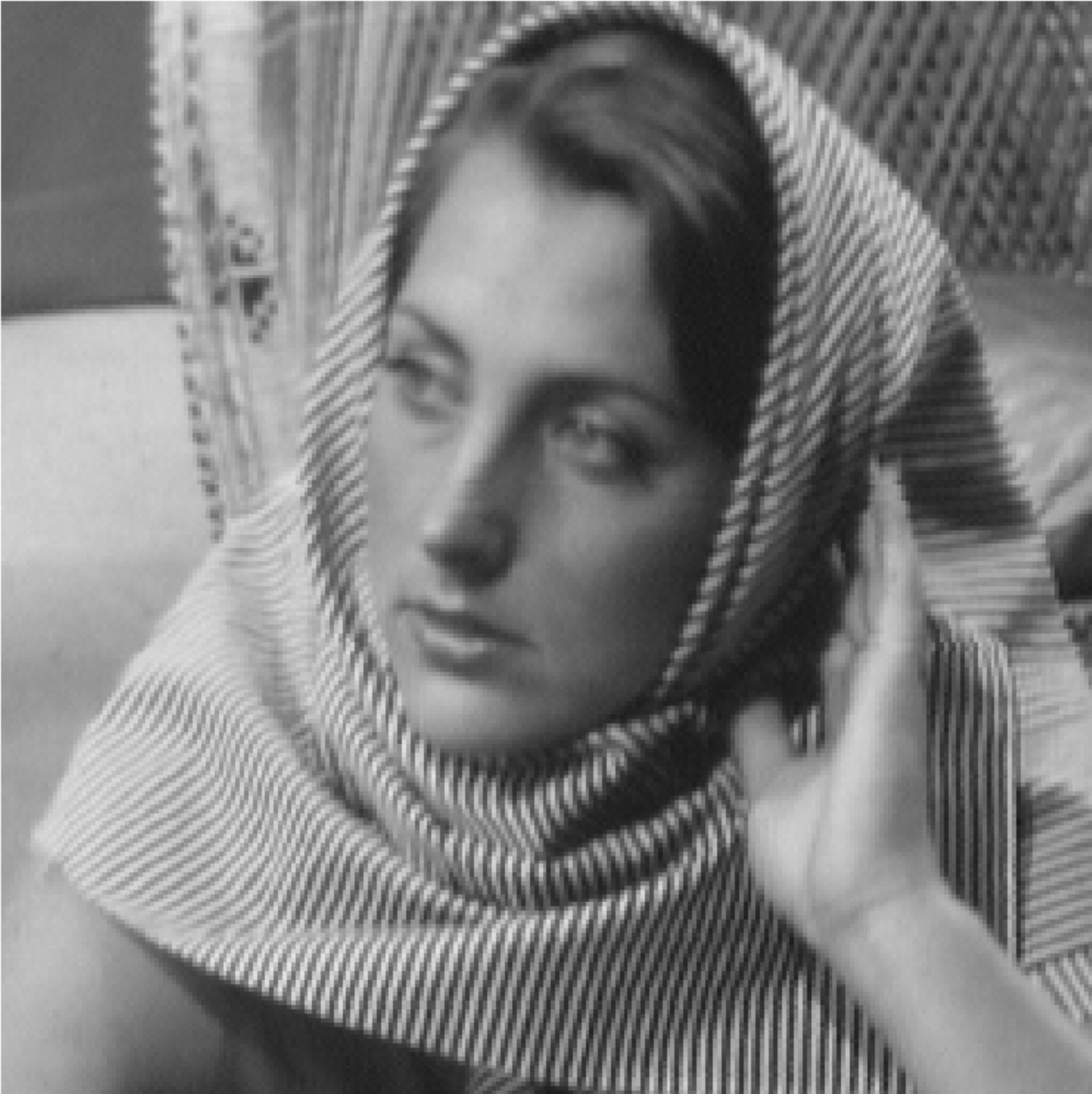}}\quad
\subfigure[Randomly phased probe]{\includegraphics[width=5.1cm]{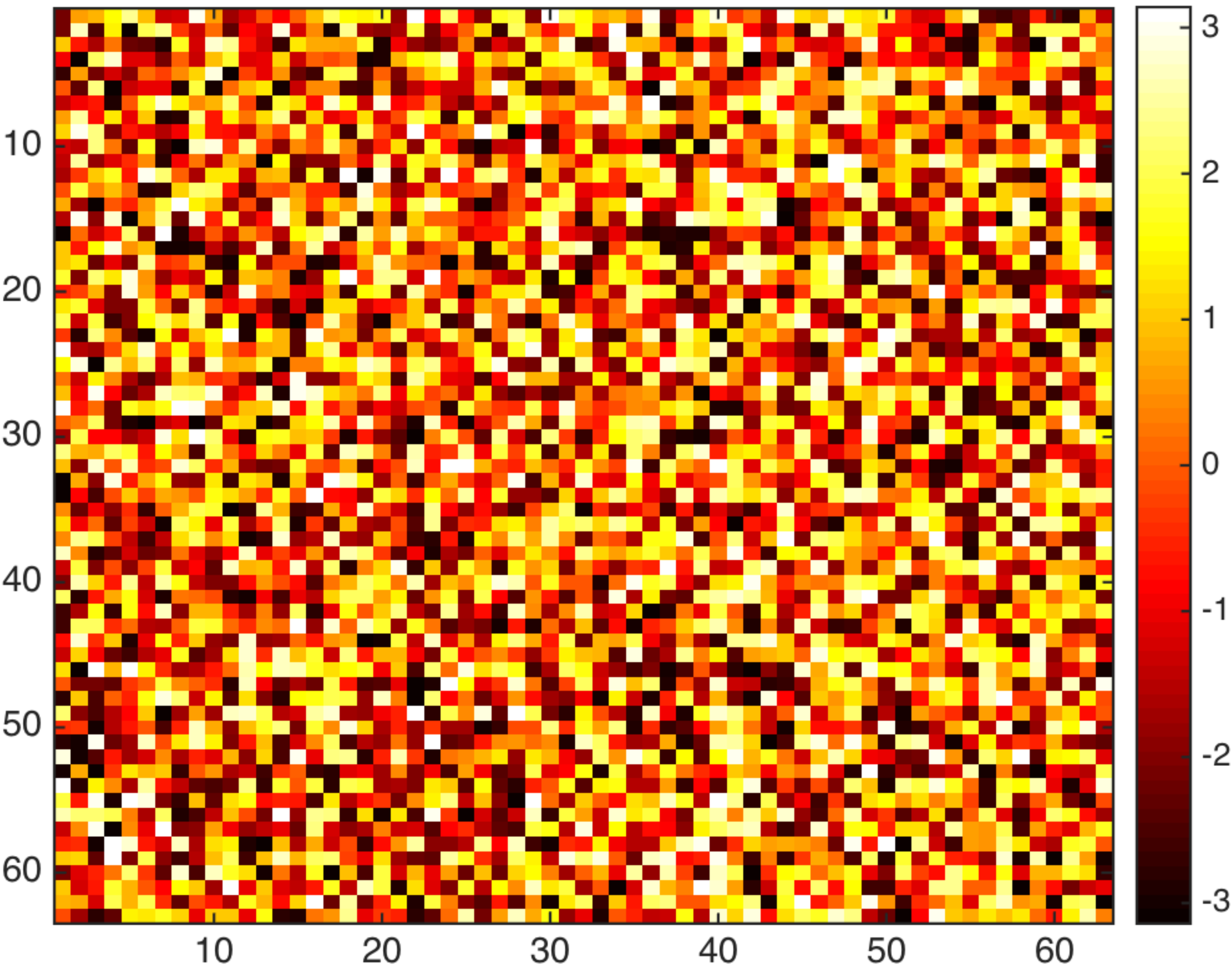}}
  \caption{The real part (a) and the imaginary part (b) of the object  and (c) randomly phased probe $\mu^{00}$.}
  \label{fig2}
\end{figure}%

\begin{figure}
\subfigure[RE with \eqref{perturb}]{\includegraphics[width=7cm]{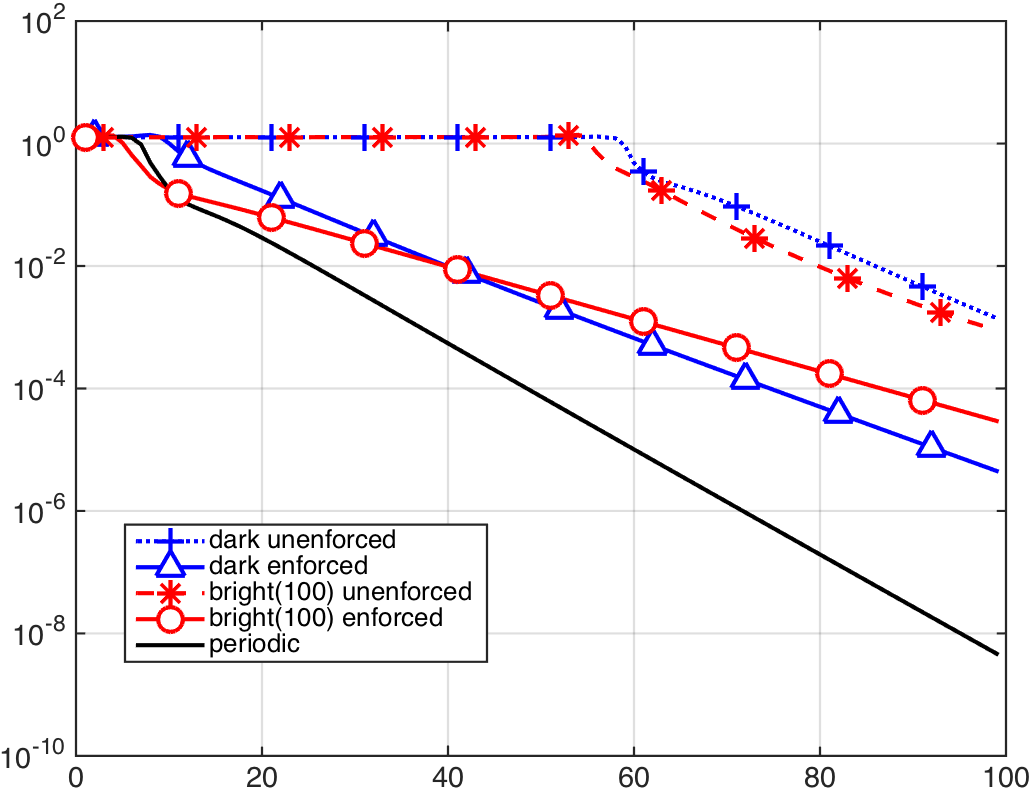}}\hspace{1cm}
\subfigure[RE with \eqref{perturb2}]{\includegraphics[width=7cm]{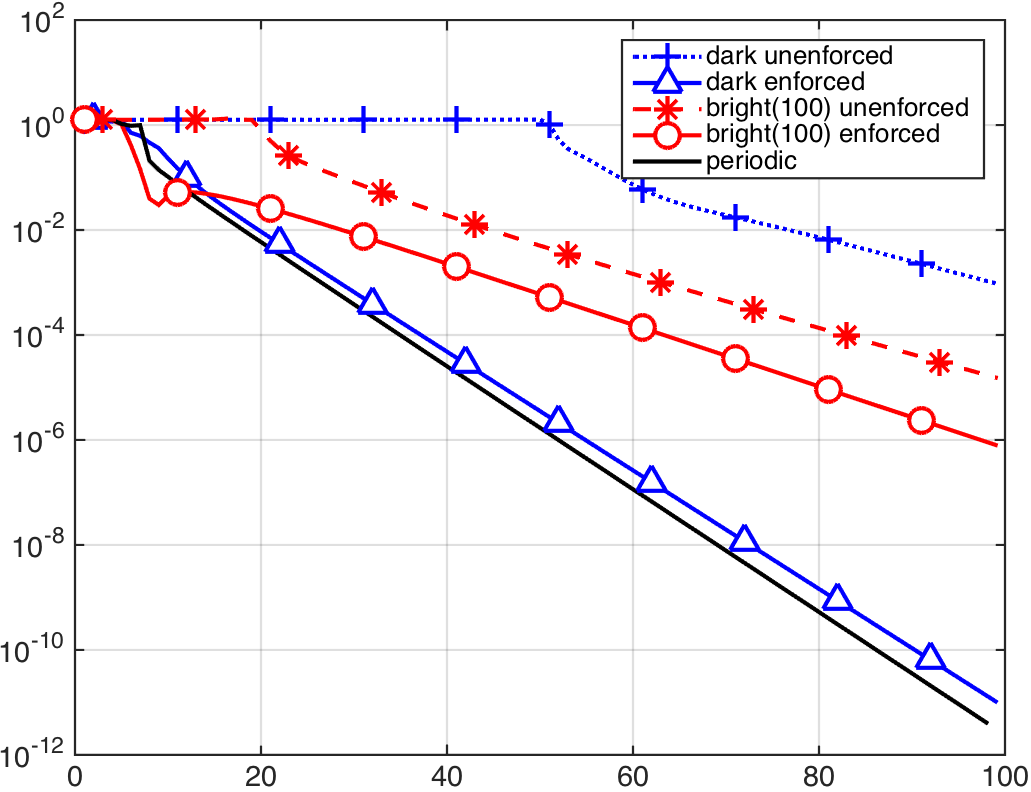}}
\caption{RE for  various boundary conditions with the sampling scheme (a) \eqref{perturb} and (b) \eqref{perturb2}.}
\label{fig3}
\end{figure}

Let $\cF(\nu,g)\in \IC^N$ be the totality of the Fourier (magnitude and phase) data  corresponding to the probe $\nu$ and the object $g$ such that $|\cF(\mu, f)|=b$ where $b$ is the noiseless ptychographic data. Since $\cF(\cdot,\cdot)$ is a bilinear function, $A_k h:=\cF(\mu_k, h), \, k\geq 1, $ defines a matrix $A_k$  for the $k$-th probe estimate $\mu_k$ and $B_k\eta :=\cF(\eta, f_{k+1}), \ k\geq 1,$  for  the $(k+1)$-st image estimate $f_{k+1}$ such that $A_k f_{j+1}=B_j\mu_k$, $j\geq 1, k\geq 1$. Let $P_k=A_kA_k^\dagger $ be the orthogonal projection onto the range of $A_k$ and
$R_k=2P_k-I$  the corresponding reflector.  
Likewise, let $Q_k=B_kB_k^\dagger $ be the orthogonal projection onto the range of $B_k$ and $S_k$ the corresponding reflector.

\begin{algorithm}
\caption{Alternating minimization (AM)}\label{alg: suedo algorithm}
\begin{algorithmic}[1]
\State Input: initial probe guess $\mathbf{\mu}_1$. 
\State Update the object estimate  
$\quad
f_{k+1}=\arg\min\cL(A_kg)$ s.t. $g\in \IC^{n\times n}$.
\State  Update the probe estimate  
$\quad
\mu_{k+1}=\arg\min\cL(B_k\nu)$ s.t.
$\nu\in \IC^{m\times m}$. 
%\State (optional) $\red |\mu(\bk)|=1, \forall\bk$.
\State Terminate if $\||B_k\mu_{k+1}|- b\|_2 $ stagnates or is less than tolerance; otherwise, go back to step 2 with $k\rightarrow k+1.$
\end{algorithmic}
\end{algorithm}

We use the objective function  
\[
\cL(y)=\half \| |y|- b\|_2^2
\]
and a randomly chosen initial probe guess satisfying
\[
\Re\lt[\overline{\mathbf{\mu}_1}(\bn)\odot \mu^{00}(\bn)\rt]>0,\quad \forall \bn,
\]
i.e. each pixel of the probe guess is aligned with the corresponding pixel of the true probe positively. 
The inner loops for updating the object and probe estimates are carried out by the Douglas-Rachford splitting method
as detailed in \cite{DRS-ptych,ptych-unique}: At epoch $k$, for $l=1,2,3,\dots$
\beqn
u_{k}^{l+1} &= &\frac{1}{2}u_{k}^{l} +\frac{1}{2}b\odot \sgn\big(R_ku_{k}^{l}\big),\quad u_k^1=u_{k-1}^\infty\\
v_{k}^{l+1}& =& \frac{1}{2} v_{k}^{l}+\frac{1}{2}b\odot \sgn{\Big(S_k v_{k}^{l}\Big)} ,\quad v_k^1=v_{k-1}^\infty
% v_{k}^{l+1}& =& \frac{1}{2} v_{k}^{l}+\frac{1}{2}b\odot \sgn{\Big(2B_k\,\sgn{\big(B_k^\dagger v_{k}^{l}\big)}- v_{k}^{l}\Big)}. 
\eeqn
with the object estimate $f_{k+1}= A_k^\dagger u_{k}^{\infty}$ and
the  probe estimate $ \mu_{k+1}=B_k^\dagger v_{k}^{\infty}$ where $u^\infty_k$ and $v_k^\infty$ are terminal values
of the $k$-th epoch of the inner loops.
In the simulation for Fig. \ref{fig3} we keep the maximum number of iterations in
the inner loop to 30. 

To discount the constant amplitude offset and the linear phase ambiguity  we consider the following relative error (RE) 
 for the recovered image $f_k$ and probe $\mu_k$ at the $k^{th}$ epoch:
\beq\label{RE}
\mbox{RE}(k)&= &\min_{\alpha\in \mathbb{C}, \mathbf{k} \in \mathbb{R}^2}\frac{\|f(\mathbf{k}) - \alpha e^{-\imath {2\pi}\mathbf{k}\cdot \mathbf{r}/n} f_k(\mathbf{k})\|_2}{\|f\|_2}
\eeq

The image  is 256-by-256 Cameraman+ $\im$ Barbara (CiB). 
We use the randomly phased probe $\mu^{00}(\bn)=\exp[\im \phi(\bn)]$ where $[\phi(\bn)]$ are  $60\times 60$ i.i.d. uniform random variables over $[0,2\pi)$.  
We let  $\delta_{k}^1$ (resp. $\delta_{kl}^1$) and $\delta^2_l$  (resp. $\delta_{kl}^2$)  to be {i.i.d. uniform random variables  over $\lb -4,4\rb$}. 
In other words, the adjacent probes overlap by an average of $1-\tau/m=50\%$.

%\subsection{Boundary conditions}
When the probe steps outside of the boundary of the object domain, the 
area $\cM\setminus \IZ_n^2$ 
needs special treatment in the reconstruction process. 

The periodic boundary condition forces 
the slope $\br$ in the linear phase ambiguity to be integers.
The dark-field and  bright-field  boundary conditions assume zero and nonzero ($=100$ in the simulation) values, respectively, in $\cM\setminus \IZ_n^2$. 
When the bright-field boundary condition is present in the simulation data and enforced in reconstruction,  the linear phase ambiguity disappears from the object estimate. On the contrary, enforcing the dark-field boundary condition can not remove
the linear phase ambiguity. In both cases, however, enforcement of boundary condition in reconstruction speeds
up the convergence as shown in Figure \ref{fig3}.

Figure \ref{fig3} shows that the sampling scheme \eqref{perturb2} generally outperforms \eqref{perturb} with a faster convergence rate, indicating that higher level of disorder
in the grid pattern is better for blind ptychohgraphy.

\section{Conclusions}\label{sec:con}
\commentout{%11/15/2018
\begin{figure}
\centering
\subfigure[concentric circles]{\includegraphics[width=6cm]{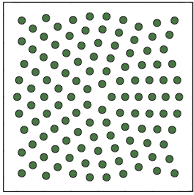}}\hspace{1cm}
\subfigure[Fermat spirals]{\includegraphics[width=6cm]{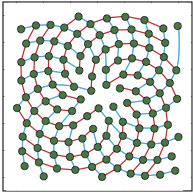}}
\end{figure}
}
We have studied the artifacts in blind ptychographic reconstruction from the perspective of
uniqueness theory of inverse problems and identified the periodic ambiguities in the raster scan
ptychography as the raster grid pathology reported in the optics literature. 

We have given a complete characterization of blind ptychographic ambiguities for the raster scan
including the periodic and non-periodic ambiguities. The non-periodic ambiguity have
an affine profile mirroring that of the block phases. To the best of our knowledge, such an ambiguity
has not been reported in the literature. 

We have presented a slightly perturbed under-shifted raster scan and proved that such a scheme can remove all
the ambiguities except for those inherent to any blind ptychography, namely the scaling factor
and the affine phase ambiguity.  In comparison, the same goal is approached in \cite{STFT} not by changing
the raster scan but by considering only a set of generic objects. 

\commentout{Note that the affine phase ambiguity is of a different nature from
that of the aforementioned non-periodic ambiguity as their affine profiles manifest on different scales:
the former on the individual pixel level while the latter on the block scale. 
}

For the perturbed under-shifted raster scan \eqref{perturb} with small random $\delta^i_j$, it is highly
probable that the co-prime condition \eqref{coprime} holds for large $q$ and hence
only the scaling factor and the affine phase ambiguity are present under \eqref{small1}-\eqref{small2} \cite{supres-PIE}.  
It would be interesting to see if the analysis presented in Section \ref{sec:mix} can be extended to other scan patterns  in practice such as the concentric circles \cite{circle,TM13,DM08}, the Fermat spiral \cite{optimal}
and those designed for Fourier ptychography \cite{optimal}.

In a noisy ptychographic experiment with the raster scan, as the step size shrinks, raster grid pathology becomes
less apparent and eventually invisible before the step size reaches 1 \cite{artifact} (cf. Corollary \ref{cor:raster}).
The affine phase ambiguity and the raster grid pathology can also be suppressed by additional prior information
such as the Fourier intensities of the probe \cite{March16}.

 \commentout{%09/25/2018
\section{Absence of phase drift}
\label{sec:u2}

In this section, we discuss a way to eliminate the affine phase ambiguity with the help of the periodic boundary condition. This is accomplished by the imposing the size of the random probe
and the fact that for any cyclic group $S$ the block phases form an arithmetic progression (Theorem \ref{lem2}). 

\begin{theorem} \label{thm:u2} Consider any cyclic subgroup $S$ of order $s$. Suppose that 
\beq
\label{u2}
n\le 2(m-\tau),\quad \mbox{(equivalently,  $q \le 2(p-1)$)}.
\eeq
Then 
\beq
\label{40}
\nu^{k}\odot g^k=e^{\im \theta}\mu^{k}\odot f^k,\quad\forall k=0,\dots,s-1,
\eeq
for some constant $\theta\in \IR$ independent of $k$. 

\end{theorem}
\begin{proof}

By \eqref{200.4'}-\eqref{200.4}, 
\beqn
e^{\im \theta_k}\nu^k(\bn)/\mu^k(\bn)=e^{\im \theta_{k+1}}\nu^{k+1}(\bn)/\mu^{k+1}(\bn),\quad
\forall \bn\in \cM^k\cap \cM^{k+1}\cap \supp(f)
\eeqn
and by Theorem \ref{lem2}
\[
 \theta_k=k\Delta \theta,\quad k=0,\dots,s-1
 \]
where $\Delta \theta$ times $s$ is an integer multiple of $2\pi$. 

If $s\Delta \theta$ is an integer multiple of $2\pi$, then the phase of 
$\nu^{00}(\bn)/\mu^{00}(\bn)$ 
would grow out of the probe phase constraint  under the constraint $q\delta \le 2(p-1)$.  

\end{proof}
}

\commentout{%09/23/2018
\section{Ambiguities with loose support}\label{sec:loose}

Objects of loose support can create ptychographic ambiguities other than
a constant scaling factor and an affine phase factor. Further, 
 \eqref{7.1} may be altogether untenable for loose support in general. 
Although the following examples involve ptychographic measurements with only two diffraction patterns,
more elaborate examples can be constructed and omitted for simplicity of presentation.
The important feature of the examples is that besides loose support the object values are arbitrary.

\begin{ex}\label{ex3.1}
For $m=3n/5$ consider $\cT=\{(0,0),\bt\}$ where  $\bt=(2m/3, 0)$. Evenly partition $f^{0}$ and $f^\bt$ into three  $m\times m/3$ parts  as 
$f^0=[0,f^0_{1}, f^0_{2}]$ and $f^\bt=[f^1_{0},  f^1_{1}, 0]$ 
with the overlap $f^{0}_{2}=f^{1}_{0}$. Likewise, partition the probe as
$\mu^0=[\mu^0_0,\mu^0_1,\mu^0_2], \mu^\bt=[\mu^1_0,\mu^1_1,\mu^1_2]$ where $\mu^\bt$ is just the $\bt$-shift of $\mu^0$,
i.e. $\mu^\bt(\bn+\bt)=\mu^0(\bn)$. 

Let $
\nu^0=\mu^0, \nu^\bt=\mu^\bt$
 and $g^0=[g^0_{0}, g^0_{1}, 0], g^\bt=[0, g^1_{1}, g^1_{2}]$  
where
\beq
\label{11.1}
g^0_{0}= f^0_{1}\odot \mu^0_1/\mu^0_0,&&g^0_{1}=f^0_{2}\odot \mu^0_2/\mu^0_1\\
\label{11.2} g^1_{1}= f^1_{0}\odot \mu^1_0/\mu^1_1,&&g^1_{2}=f^0_{1}\odot \mu^1_1/\mu^1_2.
\eeq
Clearly, $g=[g^0_{0}, g^0_{1}, 0, g^1_{1}, g^1_{2}]$ is different from $f=[0,f^0_{1}, f^0_{2},  f^1_{1}, 0]$.

It is straightforward to check that for $\mbm=(m/3,0)$
\beq
\label{3.24}g^0(\bn)\nu^0(\bn)=f^0(\bn+\mbm) \mu^0 (\bn+\mbm),\quad\bn\in \cM^0\\
\label{3.25}g^\bt(\bn)\nu^\bt(\bn)=f^\bt(\bn-\mbm) \mu^\bt (\bn-\mbm),\quad\bn\in \cM^\bt
\eeq
and hence $g^0\odot\mu^0$ and $ g^\bt \odot\mu^\bt$ produce  the same  diffraction patterns
as $f^0\odot\mu^0$ and $f^\bt\odot\mu^\bt$. 
%By setting $\nu^0=\mu^0$, we satisfy the probe phase constraint  with $\delta=0$. 

On the other hand, $g^0(\bn) \nu^0 (\bn)\neq e^{\im \theta_0}f^0(\bn)\mu^0(\bn)$ 
and $g^\bt(\bn) \nu^\bt (\bn)\neq e^{\im \theta_\bt} f^\bt(\bn) \mu^\bt (\bn)$ in general. 
Hence \eqref{7.1} is violated. 
\end{ex}

The ambiguity in Example \ref{ex3.1} is due to a loose object support which can also produce the ambiguity of conjugate inversion
as follows.

\begin{ex} \label{ex3.3} For $m=n$ consider $\cT=\{(0,0),\bt\}$ where $\bt=(m/2,0)$ with the periodic boundary condition.
Evenly partition $f^0$ and $f^\bt$ into two  $m\times m/2$ parts as 
$f^0=[0,f^0_{1}]$ and $f^\bt=[f^1_{0}, 0]$ 
with the overlap $f^0_{1}=f^1_{0}$. Likewise, partition the probe as
$\mu^0=[\mu^0_0,\mu^0_1], \mu^\bt=[\mu^1_0,\mu^1_1]$ where $\mu^\bt(\bn+\bt)=\mu^0(\bn)$. 

Let $
\nu^0=\mu^0, \nu^\bt=\mu^\bt$
 and $g^0=[g^0_{0}, 0], g^\bt=[0, g^1_{1}]$  
 where
\beq
\label{3.26'}
g^0_{0}(\bn)&= &\bar f^0_{1}(\bN-\bn)\bar\mu_1^0(\bN-\bn)/\mu^0_0(\bn)\\
\label{3.27'}g^1_{1}(\bn)&= &\bar f^1_{0}(\bN-\bn)\bar \mu_0^1(\bN-\bn)/\mu_1^1(\bn). 
\eeq
In other words, $g=[g^0_0,0,g^1_1]$ is the twin-like image of $f=[0,f^1_1,0]$. Since $2\bt=(m,0)=(0,0)$ under the periodic boundary condition, we have $g^0_{0}=g^1_{1}$.  So the construction \eqref{3.26'}-\eqref{3.27'}  is consistent with the periodic boundary condition. 

Setting $\nu^0=\mu^0$ we have
\beqn
g^0(\bn)\nu^0(\bn)&=&\bar f^0(\bN-\bn) \bar \mu^0 (\bN-\bn)\\
g^\bt(\bn)\nu^\bt(\bn)&=&\bar f^\bt(\bN-\bn) \bar\mu^\bt (\bN-\bn)
\eeqn
and hence $g^0\odot\nu^0$ and $ g^\bt \odot\nu^\bt$ produce  the same  diffraction patterns
as $f^0\odot\mu^0$ and $f^\bt\odot\mu^\bt$. 

On the other hand, if $f^0=[f^0_0, f^0_1]$  has a tight support in $\cM^0$, then  \eqref{3.26'}-\eqref{3.27'} is an inconsistent construction of object estimate 
since $g^0=[g^0_0, g^0_1]$ and $ g^\bt=[0, g^1_1]$ violate the overlap condition for $g^0_1\neq 0$. 

\end{ex}

It is important to note that if just one part $f^\bs$ has a tight support in $\cM^\bs$ and if
the set of parts $\{f^\bt: \bt\in \cT\}$ is strongly connected, then the ambiguities associated with loose support
disappear with an overwhelming  probability. 
}

%\section{Conclusion}\label{sec:con}

\section*{Acknowledgment}
{This work was supported by the National Science Foundation under Grant  DMS-1413373.}
I thank National Center for
Theoretical Sciences (NCTS), Taiwan,   where  the present work was completed, for the hospitality  during my visits  in June and August 2018. 
I am grateful for Zhiqing Zhang for preparing Fig. \ref{fig3}.  \\


\begin{thebibliography}{plain}
%\bibitem{keyhole}

%B. Abbey, K. A. Nugent, G. J. Williams, J.N. Clark, A.G. Peele , M.A. Pfeiffer, M. De Jonge \& I. McNulty, ``Keyhole coherent diffractive imaging." {\em Nat. Phys.} {\bf 4} (2008) 394-398.

\commentout{
\bibitem{AH}
P. F. Almoro and S. G. Hanson, `` Random phase plate for wavefront sensing via phase retrieval and a volume speckle field," {\em Appl. Opt.} {\bf 47} 2979-2987 (2008).

  \bibitem{Alm}
  P. F. Almoro, G. Pedrini, P. N. Gundu, W.  Osten,  S. G. Hanson,
  ``Enhanced wavefront reconstruction by random phase modulation with a phase diffuser," {\em Opt. Laser Eng.}
  {\bf 49} 252-257 (2011).

\bibitem{Bates}
R. H. T. Bates, ``Fourier phase problems are uniquely solvable in more than one dimension. 1. Underlying theory.", {\em Optik} {\bf 61(3)} (1982): 247-262. 

\bibitem{Bodin}
A. Bodin, P. D\`ebes and S. Najib, ``Irreducibility of hypersurfaces" 
{\em Comm. Alg.} {\bf 37} (2009) 1884-1900. 

%\bibitem{BW}%
%M. Born and E. Wolf, {\em Principles of Optics}, 7-th edition, 
%Cambridge University Press,  1999 (pages 199-201). 

 \bibitem{BWW}
 R. Br\"{a}uer, U. Wojak, F. Wyrowski, O. Bryngdahl, `` Digital diffusers for optical
holography," {\em Opt. Lett.} {\bf 16}, 1427Ð9 (1991).

\bibitem{BS}
Yu. M. Bruck and L. G. Sodin, "On the ambiguity of the image reconstruction problem," {\em Opt. Commun.}{\bf 30},  304-308 (1979).
}


\bibitem{STFT}
T. Bendory, D. Edidin and Y. C. Eldar,
``Blind phaseless short-time Fourier transform recovery," 	arXiv:1808.07414.

%\bibitem{FROG}
%T. Bendory, P. Sidorenko and  Y. C. Eldar,
%``On the uniqueness of FROG methods," {\em IEEE Sign. Proc. Lett.} {\bf 24} (2017),  722-726. 


\bibitem{overlap}%
O. Bunk, M. Dierolf, S. Kynde, I. Johnson, O. Marti, F. Pfeiffer, ``Influence of the overlap parameter on the convergence of the ptychographical iterative engine," {\em  Ultramicroscopy} {\bf 108} (5) (2008) 481-487.




%\bibitem{CDI}%
%H.N. Chapman \& K.A. Nugent, ``Coherent lensless X-ray imaging," {\em Nat. Photon.} {\bf 4} (2010) 833-839.

% "Microscopy: A new phase for X-ray imaging". {\em Nature.} {\bf 467} (2010), 409-410.



%\bibitem{DR-phasing}%
%P. Chen and  A. Fannjiang, ``Phase retrieval with a single mask 
%by Douglas-Rachford algorithms," {\em Appl. Comput. Harmon. Anal.} 2016, {\tt http://dx.doi.org/10.1016/j.acha.2016.07.003.}

\bibitem{ptych-unique}
P. Chen and A. Fannjiang, `` Coded-aperture ptychography: uniqueness and reconstruction", {\em Inverse Problems} {\bf 34 }(2018) 025003.


%\bibitem{AP-phasing}%
%P. Chen, A. Fannjiang and G. Liu, ``Phase retrieval with one or two coded diffraction patterns by alternating projection with the null initialization," {\em J. Fourier Anal. Appl.} 2017, DOI 10.1007/s00041-017-9536-8. 



\bibitem{Conway}
H.H. Conway  \& N.J.A. Sloane, {\em Sphere Packings, Lattices and Groups}, 3rd ed., Berlin, New York: Springer-Verlag,  1999. 

\bibitem{ptycho10}%
M. Dierolf, A. Menzel, P. Thibault, P. Schneider, C. M. Kewish, R. Wepf, O. Bunk, and F. Pfeiffer, `` Ptychographic x-ray computed tomography at the nanoscale,'' {\em Nature} {\bf 467} (2010), 436-439.

\bibitem{circle}
M. Dierolf, P. Thibault, A. Menzel, C. Kewish, K. Jefimovs, I. Schlichting, K. Kong, O. Bunk, and F. Pfeiffer, ``Ptychographic coherent diffractive imaging of weakly scattering specimens,"{\em  New J. Phys.} {\bf 12} (2010), 035017.

\bibitem{FAK}
C. Falldorf, M. Agour, C. V. Kopylow and R. B. Bergmann,
``Phase retrieval by means of a spatial light modulator in the Fourier domain of an imaging system," {\em Appl. Opt.} {\bf 49}, 1826-1830 (2010). 


\bibitem{Horisaki2}
R. Egami, R. Horisaki, L. Tian \& J. Tanida,
``Relaxation of mask design for single-shot phase imaging with a coded aperture,"
{\em Appl. Opt.} {\bf 55} (2016) 1830-1837. 

%\bibitem{rand}
%A. Fannjiang, ``Exact localization and superresolution with noisy data and random illumination," {\em Inverse Problems} {\bf 27} (2011) 065012. 

%\bibitem{unique} %
%A. Fannjiang, \lq\lq Absolute uniqueness of phase retrieval with random illumination," {\em Inverse Problems} {\bf 28} (2012), 075008 (2012). 
\bibitem{blind-ptych}
A. Fannjiang and P. Chen, ``Blind ptychography: uniqueness \& ambiguities," arXiv:1806.02674. 


\bibitem{rpi} 
A. Fannjiang and W. Liao, \lq\lq Phase retrieval with random phase illumination," {\em J. Opt. Soc. A} {\bf 29},   1847-1859 (2012).  

%\bibitem{rpi} %
%A. Fannjiang and W. Liao, \lq\lq Phase retrieval with random phase illumination," {\em J. Opt. Soc. A} {\bf 29} (2012),   1847-1859.  

%\bibitem{pum}%
%A. Fannjiang and W. Liao, ``Fourier phasing with phase-uncertain mask," {\em Inverse Problems} 
%{\bf 29} (2013) 125001.


\bibitem{DRS-ptych}
A. Fannjiang and Z. Zhang, ``Blind ptychography by Douglas-Rachford splitting," {arXiv:1809.00962.}

\commentout{
\bibitem{PIE104}%
H.M.L. Faulkner and J.M. Rodenburg, ``Movable aperture lensless transmission microscopy: A novel phase retrieval algorithm," {\em Phys. Rev. Lett.}{\bf 93:2} (2004), 023903. 

\bibitem{PIE05}%
H.M.L. Faulkner and J.M. Rodenburg,
``Error tolerance of an iterative phase retrieval algorithm for moveable illumination microscopy,"
{\em Ultramicroscopy} {\bf 103:2} (2005), 153-164. 

%\bibitem{Fie82}
%Fienup, J. R. 1982 Phase retrieval algorithms|a comparison. {\em Appl. Opt. } 21, 2758-2769. 
%\bibitem{Fie97}
%Fienup, J. R. 1987 Reconstruction of a complex-valued object from the modulus of its Fourier-transform using a support constraint. {\em J. Opt. Soc. Am. A} 4, 118-123.
\bibitem{EM-ptych1}
S. Gao, P. Wang,  F.  Zhang, G. T. Martinez, P. D. Nellist, X. Pan \& A. I. Kirkland,
``Electron ptychographic microscopy for three-dimensional imaging,"
{\em Nat. Comm.} {\bf 18} 163 (2017)


\bibitem{Fie08}%
M. Guizar-Sicairos,  J.R. \& Fienup,  ``Phase retrieval with transverse translation diversity: a nonlinear optimization approach." {\em Opt. Express} {\bf 16} (2008), 7264-7278.
}

\bibitem{linear}
M. Guizar-Sicairos, A. Diaz, M. Holler, M. S. Lucas, A. Menzel, R. A. Wepf, and O. Bunk, ``Phase tomography from x-ray coherent diffractive imaging projections,"  {\em Opt. Exp.}{\bf 19} (2011), 21345-21357.

\bibitem{optimal}
K. Guo, S. Dong, P. Nanda and G. Zheng,
``Optimization of sampling pattern and the design of Fourier ptychographic illuminator,"
 {\em Opt. Exp.} {\bf 23} (2015) 6171-6180. 
 
 \commentout{
 
\bibitem{Hayes}%
M. Hayes, ``The reconstruction of a multidimensional sequence from the phase or magnitude of its Fourier
transform," {\em IEEE Trans. Acoust. Speech Signal Process.} {\bf 30} (1982), 140-154. 

\bibitem{Hesse}%
R. Hesse, D. R. Luke, S. Sabach, and M.K. Tam,
``Proximal heterogeneous block implicit-explicit method and application to blind ptychographic diffraction imaging," {\em SIAM J. Imag. Sci.} {\bf 8} (2015) pp. 426-457. 
}
\bibitem{artifact}
X. Huang, H. Yan, M. Ge, H. \"Ozt\"ork, E. Nazaretski, I. K. Robinson and Y.S. Chu,
``Artifact mitigation of ptychography integrated with on-the-fly scanning probe
microscopy," {\em Appl. Phys. Lett.} {\bf 111}(2017) 023103.

\bibitem{scan1}
X. Huang, H. Yang, R. Harder, Y. Hwu, I.K. Robinson \& Y.S. Chu,``Optimization of overlap uniformness for ptychography," {\em Opt. Express} {\bf 22} (2014), 12634-12644.

%\bibitem{scan2}
%X. Huang, H. Yan, M. Ge, H.  \"Ozt\"urk, , E. Nazaretski, I. K. Robinson, \& Yong S. Chu, ``Artifact mitigation of ptychography integrated with on-the-fly scanning probe microscopy," {\em Appl. Phys. Lett.} {\bf 111} (2017) 023103. 


\commentout{
\bibitem{Hoppe1}%
W. Hoppe, ``Beugung im inhomogenen Primärstrahlwellenfeld. I. Prinzip einer Phasenmessung von Elektronenbeungungsinterferenzen". {\em Acta Cryst.. A.} {\bf 25} (4) (1969) 495.

\bibitem{Hoppe2}%
W. Hoppe, ``Beugung im inhomogenen Primärstrahlwellenfeld. III. Amplituden- und Phasenbestimmung bei unperiodischen Objekten", {\em Acta Cryst. A}. {\bf 25} (4) (1969) 508.
}

\bibitem{Horisaki1}
R. Horisaki, R. Egami \& J. Tanida, ``Single-shot phase imaging with randomized light (SPIRaL)". {\em Opt. Express} {\bf 24}, 3765-3773 (2016).

\commentout{
\bibitem{meta1}%
 J. Hunt, T. Driscoll, A. Mrozack, G. Lipworth, M. Reynolds, D. Brady, and D. R. Smith, "Metamaterial Apertures for Computational Imaging," {\em Science} {\bf 339} (2013), 310-313.
 }
 
 \bibitem{Iwen}%
M. A. Iwen, A. Viswanathan, and Y. Wang, 
``Fast phase retrieval from local correlation measurements,"
{\em SIAM J. Imaging Sci.} {\bf 9(4)}(2016), pp. 1655-1688. 

 \bibitem{EM-ptych2}
Y. Jiang,  Z.  Chen,  Y. Han, P. Deb, H. Gao, S. Xie, P. Purohit, M. W. Tate, J. Park, S. M. Gruner, V.  Elser \& D. A. Muller
``Electron ptychography of 2D materials to deep sub-angstrom resolution," {\em Nature} {\bf 559} 343-349 (2018). 


%\bibitem{meta3}%
% G. Lipworth, A. Mrozack, J. Hunt, D.L. Marks, T. Driscoll, D. Brady, "Metamaterial apertures for coherent computational imaging on the physical layer," {\em J. Opt. Soc. Am. A} {\bf 30} (2013), 1603-1612.

\bibitem{supres-PIE}%
A. M. Maiden, M. J. Humphry, F. Zhang and J. M. Rodenburg, ``Superresolution imaging via ptychography,"
{\em J. Opt. Soc. Am. A} {\bf 28} (2011), 604-612. 

\bibitem{rPIE17}
A. Maiden, D. Johnson and P. Li, 
``Further improvements to the ptychographical iterative engine," {\em Optica}
{\bf 4} (2017), 736-745.


\bibitem{ptycho-rpi}%
A.M. Maiden,	 G.R. Morrison,	 B. Kaulich,	 A. Gianoncelli	 \& J.M. Rodenburg,
``Soft X-ray spectromicroscopy using ptychography with randomly phased illumination,"
{\em Nat. Commun.} {\bf 4} (2013), 1669. 


\bibitem{ePIE09}%
A.M. Maiden \& J.M. Rodenburg, ``An improved ptychographical phase retrieval algorithm for diffractive imaging," {\em Ultramicroscopy} {\bf 109} (2009), 1256-1262.

\bibitem{March16}
S. Marchesini, H. Krishnan, B. J. Daurer, D.A. Shapiro, T. Perciano, J. A. Sethian and F.R.N.C. Maia, ``SHARP: a distributed GPU-based ptychographic solver," {\em J. Appl. Cryst.}{\bf 49} (2016), 1245-1252. 

\commentout{
\bibitem{ePIE10}%
A. M. Maiden, J. M. Rodenburg and M. J. Humphry,
``Optical ptychography: a practical implementation with useful resolution," {\em Opt. Lett.}
{\bf 35} (2010), 2585-2587. 

\bibitem{Miao}
J. Miao, D. Sayre, and H. N. Chapman. ``Phase retrieval from the magnitude of the Fourier transforms of nonperiodic objects.", {\em J. Opt. Soc. Am. A} {\bf15(6)} (1998):1662-1669.
}
\bibitem{parallel}
Y. S. G. Nashed, D. J. Vine, T. Peterka, J. Deng,
R. Ross  and C. Jacobsen, 
``Parallel ptychographic reconstruction," {\em Opt. Express} {\bf 22} (2014) 32082-32097. 

%\bibitem{Popescu}
%M. Mir, B. Bhaduri, R. Wang, R. Zhu and
%G.  Popescu, ``Quantitative Phase Imaging", {\em Progress in Optics} {\bf 57} 133-217 (2012)
\bibitem{EM0}
P.D. Nellist, B.C. McCallum \& J.M. Rodenburg, ``Resolution beyond the information limit in transmission electron microscopy, " {\em Nature} {\bf 374} (1995) 630-632. 

\bibitem{EM1}
P.D. Nellist and J.M. Rodenburg, `` Electron ptychography. I. Experimental demonstration beyond the conventional resolution limits," {\em Acta Cryst. A} {\bf 54} (1998), 49-60. 

\bibitem{Nugent}
K.A. Nugent, ``Coherent methods in the X-ray sciences, " {\em Adv. Phys.} {\bf 59} (2010) 1-99. 

\bibitem{Yang14}
X. Ou, G. Zheng and C. Yang, ``Embedded pupil function recovery for Fourier ptychographic microscopy,"
{\em Opt. Exp.} {\bf 22} (2014) 4960-4972. 

%\bibitem{ptycho-alg}
%J. Qian, C. Yang, A. Schirotzek, F. Maia, and S. Marchesini, Efficient algorithms for ptychographic
%phase retrieval, Inverse Probl. Appl., Contemp. Math, 615 (2014), pp. 261Ð280.


\bibitem{random-aperture}
X. Peng, G.J. Ruane, M.B. Quadrelli \& G.A. Swartzlander, `` Randomized apertures: high resolution imaging in far field," {\em Opt. Express} {\bf 25} (2017) 296187. 

\bibitem{Pfeiffer}
F. Pfeiffer, ``X-ray ptychography," {\em Nat. Photon.} {\bf 12} (2017) 9-17. 
%\newpage
%\bibitem{EM2}
%T. Plamann \& J.M. Rodenburg, ``Electron ptychography. II. Theory of three-dimensional propagation effects," {\em Acta Cryst.  A} {\bf 54} (1998), 61-73. 

\bibitem{Rod08}
J.M. Rodenburg, ``Ptychography and related diffractive imaging methods," {\em Adv. Imaging  Electron Phys.} {\bf 150} (2008) 87-184. 


%\bibitem{PIE204}%
%J.M.  Rodenburg and H.M.L.  Faulkner, ``A phase retrieval algorithm for shifting illumination". {\em Appl. Phys.  Lett.} {\bf 85} (2004), 4795.

%\bibitem{SA}
%A. Sakdinawat \& D. Attwood, `` Nanoscale X-ray imaging." {\em Nat. Photon.} {\bf 4} (2010) 840-848.

\bibitem{RCM}
M.H. Seaberg, A. d'Aspremont \& J.J. Turner, ``Coherent diffractive imaging using randomly coded masks," {\em Appl. Phys. Lett.} {\bf 107} (2015) 231103.


%\bibitem{FROG2}
%P. Sidorenko, O. Lahav, Z. Avnat \& O. Cohen,
%``Ptychographic reconstruction algorithm for frequency-resolved optical gating: super-resolution and supreme robustness," {\em Optica} {\bf 3} (2016) 1320-1330.

\bibitem{diffuser}
M. Stockmar, P. Cloetens, I. Zanette, B. Enders, M. Dierolf, F. Pfeiffer, and P. Thibault,``Near-field ptychography: phase retrieval for inline holography using a structured illumination," {\em Sci. Rep.} {\bf 3} (2013), 1927.


\bibitem{random-coding}
D. Sylman, V. Mic\'o, J. Garca \& Z. Zalevsky,
``Random angular coding for superresolved imaging,"
{\em Appl. Opt.}{\bf 49} (2010), 4874-4882. 

\bibitem{probe09}%
P. Thibault, M. Dierolf, O. Bunk, A. Menzel, F. Pfeiffer, 
``Probe retrieval in ptychographic coherent diffractive imaging,"
{\em Ultramicroscopy}
{\bf 109} (2009),  338-343.

\bibitem{DM08}%
P. Thibault, M. Dierolf, A. Menzel, O. Bunk, C. David, F. Pfeiffer, ``High-resolution scanning X-ray diffraction microscopy", {\em Science} {\bf 321} (2008),  379-382.

\bibitem{TM13}
P. Thibault and A. Menzel, ``Reconstructing state mixtures from diffraction measurements,"{\em Nature} {\bf 494} (2013), 68-71.

\commentout{
\bibitem{ML12}
P. Thibault and M. Guizar-Sicairos, `` Maximum-likelihood refinement for coherent diffractive imaging". {\em New J. Phys.} {\bf 14} (2012), 063004.

\bibitem{terahz}
L. Valzania, T. Feurer, P. Zolliker \& E. Hack,
``Terahertz ptychography," {\em Opt. Lett.} {\bf 43} (2018), 543-546. 

\bibitem{meta2}%
 C. M. Watts, D. Shrekenhamer, J. Montoya, G. Lipworth, J. Hunt, T. Sleasman, S. Krishna, D. R. Smith, and W. J. Padilla, ``Terahertz compressive imaging with metamaterial spatial light modulators," {\em Nat. Photon.}  {\bf 8} (2014), 605-609.

\bibitem{Waller15}
L. Yeh, J. Dong, J. Zhong, L.Tian, M. Chen, G. Tang, M. Soltanolkotabi,  and L.  Waller,
``Experimental robustness of Fourier  ptychography phase retrieval algorithms," {\em Optics Express}
{\bf 23} (2015) 33214-33240.
}

%\bibitem{ADM}
%Z. Wen, C. Yang, X. Liu and S. Marchesini,  ``Alternating direction methods for classical and ptychographic phase retrieval," {\em  Inverse Problems} {\bf 28} (2012),  115010.

\bibitem{Fucai2}
F. Zhang, B. Chen, G. R. Morrison, J. Vila-Comamala, M. Guizar-Sicairos
\& I. K. Robinson, ``Phase retrieval by coherent modulation imaging,"
{\em Nat. Comm.} {\bf 7} (2016):13367. 



%\bibitem{Fucai1}
%F. Zhang, G. Pedrini \& W. Osten,
%``Phase retrieval of arbitrary complex-valued fields through aperture-plane modulation," {\em Phys. Rev. A} {\bf 75} (2007), 043805.  


\bibitem{Spread}%
X. Zhang, J. Jiang, B. Xiangli, G.R. Arce,
``Spread spectrum phase modulation for coherent X-ray diffraction imaging,"
 {\em Optics Express} {\bf 23} (2015), 25034-25047.

 
 
 \bibitem{FPM13}
 G. Zheng, R. Horstmeyer and C.Yang, ``Wide-field, high-resolution Fourier ptychographic microscopy," {\em Nature Photonics} {\bf 7} (2013), 739-745.
 
% \bibitem{adaptive}
%C. Zuo, J. Sun and Q. Chen, ``Adaptive step-size strategy for noise-robust Fourier ptychographic microscopy," {\em Optics Express} {\bf 24} (2016), 20724-20744.
 
\end{thebibliography}
\end{document}